\numberwithin{equation}{section}
\newtheorem{theorem}{Theorem}[section]
\newtheorem{lemma}{Lemma}[section]
\newtheorem{remark}[theorem]{Remark}
\theoremstyle{definition}
\numberwithin{equation}{section}
\theoremstyle{plain}
\theoremstyle{remark}
\numberwithin{equation}{section}
\title{Jackknife Empirical Likelihood for Multivariate three-Sample U-Statistics and its Applications}
\author{N\lowercase{aresh} G\lowercase{arg$^a$}, L\lowercase{itty} M\lowercase{athew$^b$}, I\lowercase{sha} D\lowercase{ewan$^a$ and} S\lowercase{udheesh} K\lowercase{umar} K\lowercase{attumannil$^{c,\dag}$}\\
\lowercase{$^a$} S\lowercase{tatistical} S\lowercase{ciences} U\lowercase{nit},
	I\lowercase{ndian} S\lowercase{tatistical} I\lowercase{nstitute}, D\lowercase{elhi}, I\lowercase{ndia},\\ 
\lowercase{$^b$}S\lowercase{chool of} C\lowercase{omputer} S\lowercase{cience and} S\lowercase{tatistics},
	T\lowercase{rinity} C\lowercase{ollege} D\lowercase{ublin}, D\lowercase{ublin}, I\lowercase{reland, }\\
\lowercase{$^c$}S\lowercase{tatistical} S\lowercase{ciences} U\lowercase{nit},
	I\lowercase{ndian} S\lowercase{tatistical} I\lowercase{nstitute}, C\lowercase{hennai}, I\lowercase{ndia}.}
\thanks{$\dag$Corresponding Author mail: skkattu@isichennai.res.in}
\begin{document}
	\maketitle

\begin{abstract}
We develop a jackknife empirical likelihood (JEL) framework for inference on parameters defined through multivariate three-sample U-statistic. From three independent multivariate samples, we construct JEL ratio statistic based on suitable jackknife pseudo-values and, under mild regularity conditions, establish a Wilks-type result showing that the log JEL ratio converges in distribution to a chi-square limit. This provides asymptotically valid confidence intervals for the parameter of interest without explicit variance estimation or heavy resampling. To illustrate the usefulness of the proposed method, we construct confidence intervals for differences in volume under the surface (VUS) measures, which are widely used in classification problems. Through Monte Carlo simulations, we compare the performance of JEL-based confidence intervals with those obtained from normal approximation of U-statistic and kernel-based methods. The findings indicate that the proposed JEL approach outperforms existing methods in terms of coverage probability and computational efficiency. Finally, we apply our methods to a recent real dataset. 
\\

\noindent \textbf{Keyword:}
	Jackknife empirical likelihood (JEL); Volume under surface (VUS);	U-statistics.
\end{abstract}

	\section{\textbf{Introduction}}
Empirical likelihood (EL) has emerged as a useful tool in statistical inference when traditional parametric assumptions about the distribution of the data are either unknown or do not hold. EL operates on the principle of maximizing a nonparametric likelihood function with constraints  on  observed data with applications to  hypotheses testing and  confidence interval estimation. Early work on EL is due to Owen (\citeyear{owen1988empirical}, \citeyear{owen1990empirical}). \cite{owen2001empirical} highlighted that empirical likelihood (EL) inherently constructs confidence regions, accommodates supplementary constraints, and preserves the natural scale and transformations of the data. This adaptability has established EL as a useful methodology in  in handling incomplete data and addressing complex multivariate problems (\cite{jian2008}).\\


    For nonlinear statistics such as U-statistics, however, the original EL formulation can be difficult to implement directly, because the constraints involve nonlinear functionals of the empirical distribution. To address this difficulty, \cite{jing2009jackknife} proposed the jackknife empirical likelihood (JEL) method, which constructs empirical likelihood ratios based on jackknife pseudo-values. Their work established Wilks-type results for one- and two-sample univariate U-statistics and demonstrated that JEL often improves coverage accuracy over normal approximation methods while avoiding explicit variance estimation. Since then, JEL has been extended to various settings, including multivariate mean comparisons \cite{Wang2013}, non-smooth estimating equations \cite{Li2016}, and tests based on energy statistics \cite{SANG2020}, underscoring its flexibility for complex nonparametric functionals. \cite{sang2021jackknife} developed a jackknife empirical likelihood procedure for nonparametric $k$-sample tests based on categorical Gini correlations, and established a chi-square limiting distribution for the resulting test statistic.
\\

Despite this progress, existing JEL developments for U-statistics are predominantly limited to one- or two-sample problems, or to vector parameters in a single-sample setting. In many applications, however, the parameter of interest is naturally expressed as a functional of multivariate three-sample U-statistics. Examples include ordering probabilities in three-arm clinical trials, multivariate stress--strength reliability problems, and measures of three-class separation in diagnostic and classification settings. In such problems, the underlying U-statistic is built from three independent multivariate samples, and the usual one- or two-sample JEL theory is not directly applicable. A general JEL framework for multivariate $3$-sample U-statistics is therefore needed.
\\

The goal of this paper is to develop such a framework. We consider U-statistics built from three independent $q$-dimensional samples with degrees $(m_1,m_2,m_3)$ and a symmetric kernel, and define a combined-sample construction that allows us to work with jackknife pseudo-values that are asymptotically independent but not identically distributed. Within this setting, we obtain a jackknife empirical likelihood ratio for a general scalar parameter $\theta$ defined through a three-sample U-statistic and establish a Wilks-type theorem: under mild moment and sample-size conditions, the JEL log-likelihood ratio converges in distribution to a chi-square random variable with one degree of freedom. This yields asymptotically valid confidence intervals for $\theta$ without explicit variance estimation or resampling.
\\

To illustrate the practical relevance of the proposed methodology, we focus on problems arising in diagnostic accuracy assessment. For three-class classification, we construct confidence intervals for the difference between two correlated volumes under ROC surfaces (VUSs). In this case, the parameters of interest can be written as functionals of multivariate three-sample U-statistics, and our general JEL theory provides a coherent nonparametric inference procedure. Extensive Monte Carlo simulations under several multivariate models, including the Marshall--Olkin bivariate exponential and FGM-based bivariate Pareto distributions, are used to study the finite-sample performance of the proposed JEL intervals in comparison with intervals based on asymptotic normality with jackknife variance estimation and kernel-based bootstrap methods. Finally, we apply the methods to biomarker data from the Alzheimer's Disease Neuroimaging Initiative, where we construct confidence intervals for differences in VUSs when comparing diagnostic markers across cognitive groups.
\\

The  paper is structured as follows. Section 2 we study $q$-dimensional U-statistics based on three samples. We study its asymptotic properties and find a consistent estimator of the unknown variance of the U-statistic based  on Jackknife pseudo values. Jackknife empirical likelihood approach for these U-statistics is also discussed.  As an illustration,in Section 3, we find confidence intervals for the difference between two dependent volumes under surfaces (VUS) using  asymptotic normality of U-statistics, JEL method  and bootstrapping of kernel type estimators . Section 4 presents extensive simulation studies assessing the performance of these confidence intervals under the Marshall–Olkin bivariate exponential and Bivariate Pareto models. Section 4 illustrates the methodology through a real-data analysis. Section 5 highlights the findings of the study along with some open problems. 

 \section{\textbf{Three sample multivariate  U-Statistics  }} \label{Theory}
Suppose $\left(\bold{X}_{1}, \ldots, \bold{X}_{n_{1}}\right),\left(\bold{Y}_{1}, \ldots, \bold{Y}_{n_{2}}\right)$ and $\left(\bold{Z}_{1}, \ldots, \bold{Z}_{n_{3}}\right)$ represents three independent random samples from $q$-dimensional distribution functions $F$, $G$ and $H$, respectively, where $\bold{X}_{i}=(X_{i1},X_{i2},\dots,X_{iq})^T,\; i=1,2,...,n_1$, $\bold{Y}_{j}=(Y_{j1},Y_{j2},\dots,Y_{jq})^T,\;$ $ j=1,2,...,n_2,$ and $\bold{Z}_{k}=(Z_{k1},Z_{k2},\dots,Z_{kq})^T,\; k=1,2,...,n_3$, and $q\geq 1$. \\

Let $$\theta=E[h\left(\bold{X}_{i_{1}}, \ldots, \bold{X}_{i_{m_{1}}}, \bold{Y}_{j_{1}}, \ldots, \bold{Y}_{j_{m_{2}}}, \bold{Z}_{k_{1}} \ldots, \bold{Z}_{k_{m_{3}}}\right)]\in \Re$$ be the parameteric function  of interest, where 
$h : \mathcal{X}^{m_{1}} \times \mathcal{Y}^{m_{2}} \times \mathcal{Z}^{m_{3}} \;\rightarrow\; \Re$
is a measurable  function symmetric in its arguments, i.e., invariant under any permutation of indices within each sample, and  $\mathcal{X}, \mathcal{Y},$ and $ \mathcal{Z}$ be the sample spaces associated with the distribution functions $F$, $G$, and $H$, respectively.

An unbiased estimator of $\theta$  based on  three-sample U-statistic with a symmetric kernel $h$  of degree $\left(m_{1}, m_{2}, m_{3}\right)$ is given by
\begin{align} \label{ustat}
	U_{n_{1}, n_{2}, n_{3}}= & \frac{1}{{n_1\choose m_1}{n_2\choose m_2}{n_3\choose m_3}}  \mathop{\sum\sum\sum}_{\substack{1 \leq i_{1}<\ldots<i_{m_{1}} \leq n_{1} \\ 1 \leq j_{1}<\ldots<j_{m_{2}} \leq n_{2} \\ 1 \leq k_{1}<\ldots<k_{m_{3}} \leq n_{3} }} h\left(\bold{X}_{i_{1}}, \ldots, \bold{X}_{i_{m_{1}}}, \bold{Y}_{j_{1}}, \ldots, \bold{Y}_{j_{m_{2}}}, \bold{Z}_{k_{1}} \ldots, \bold{Z}_{k_{m_{3}}}\right),
\end{align}
where $\{i_{1}, \ldots, i_{m_1}\}$, $\{j_{1}, \ldots, j_{m_2}\}$ and    $\{k_{1}, \ldots, k_{m_3}\}$ indicates $m_1$, $m_2$ and $m_3$ integers chosen without replacement from sets $\{1, \ldots, n_1\}$, $\{1, \ldots, n_2\}$ and $\{1, \ldots, n_3\}$, respectively.

\subsection{Three sample multivariate  U-Statistic -Asymptotic results}
\label{CI for U statistics}
\vspace{.01in}

In this section  we study the asymptotic properties of the U-statistic \(U_{n_{1}, n_{2}, n_{3}}\). Define

\begin{align*}
	\sigma_{100}^{2}&=Var\left[E\left(h\left(\bold{X}_{1},\dots,\bold{X}_{m_1}, \bold{Y}_{1},\dots,\bold{Y}_{m_2}, \bold{Z}_{1},\dots, \bold{Z}_{m_3}\right) \mid \bold{X}_1\right)\right], \\
	\sigma_{010}^{2}&=Var\left[E\left(h\left(\bold{X}_{1},\dots,\bold{X}_{m_1}, \bold{Y}_{1},\dots,\bold{Y}_{m_2}, \bold{Z}_{1},\dots, \bold{Z}_{m_3}\right) \mid \bold{Y}_1\right)\right]
\end{align*}
\text{and}
\begin{align*}
	\sigma_{001}^{2}&=Var\left[E\left(h\left(\bold{X}_{1},\dots,\bold{X}_{m_1}, \bold{Y}_{1},\dots,\bold{Y}_{m_2}, \bold{Z}_{1},\dots, \bold{Z}_{m_3}\right) \mid \bold{Z}_1\right)\right].
\end{align*}

\vspace{.05in}

 \begin{theorem} \label{lemma1}
 
	\vspace{.1in}
    
    \noindent(a) $U_{n_{1}, n_{2}, n_{3}} \xrightarrow{\text { a.s }} \theta$ as $\min\{n_1,n_2,n_3\} \rightarrow \infty$;
    
    \vspace{.1in}
    
	\noindent(b) Suppose that $\sigma_{100}^{2}, \sigma_{010}^{2}, \sigma_{001}^{2}>0$  holds. Let
	$\sigma^2=\frac{m_1^2\sigma_{100}^{2}}{ n_{1}}+\frac{m_2^2\sigma_{010}^{2}}{n_{2}}+\frac{m_3^2\sigma_{001}^{2}}{ n_{3}},$
	Then,
	\begin{equation*}
		\frac{U_{n_{1}, n_{2}, n_{3}}-\theta}{\sigma} \xrightarrow{d} N(0,1), \text { as } \min\{n_1,n_2,n_3\}  \rightarrow \infty.
	\end{equation*}

\end{theorem}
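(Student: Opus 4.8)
The three assertions are the strong law, the CLT, and the variance‑estimator consistency; each rests on the classical Hoeffding projection (H‑decomposition) of the multi‑sample U‑statistic, which I would set up first.

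\textbf{Plan for (a).} The plan is to invoke the strong law of large numbers for multi‑sample U‑statistics. I would first reduce to the case of a bounded (or integrable) kernel by the standard truncation argument, then note that $U_{n_1,n_2,n_3}$ is a reverse martingale in each index separately (conditioning on the $\sigma$‑fields generated by the order statistics of the three samples); applying the reverse martingale convergence theorem along each coordinate, together with $E[h]=\theta$, yields $U_{n_1,n_2,n_3}\xrightarrow{a.s.}\theta$ as $\min\{n_1,n_2,n_3\}\to\infty$. Alternatively one can cite Hoeffding's original SLLN for $U$-statistics extended to the $k$-sample setting; I would state it as a known result and give the martingale sketch.

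\textbf{Plan for (b), the CLT.} I would write the Hoeffding decomposition
\begin{equation*}
U_{n_1,n_2,n_3}-\theta = \frac{m_1}{n_1}\sum_{i=1}^{n_1}h_{100}(\bold X_i) + \frac{m_2}{n_2}\sum_{j=1}^{n_2}h_{010}(\bold Y_j) + \frac{m_3}{n_3}\sum_{k=1}^{n_3}h_{001}(\bold Z_k) + R_{n_1,n_2,n_3},
\end{equation*}
where $h_{100}(\bold x)=E[h(\bold x,\bold X_2,\dots,\bold Y_1,\dots,\bold Z_1,\dots)]-\theta$ (and similarly for $h_{010},h_{001}$), and $R_{n_1,n_2,n_3}$ collects all higher‑order degenerate terms. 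A direct variance computation gives $E[R_{n_1,n_2,n_3}^2]=O\!\left(\max\{n_1,n_2,n_3\}^{-2}\right)$, hence $R_{n_1,n_2,n_3}=o_p(S_{n_1,n_2,n_3})$ since $S_{n_1,n_2,n_3}^2\asymp(\min\{n_1,n_2,n_3\})^{-1}$ under the assumed positivity of $\sigma_{100}^2,\sigma_{010}^2,\sigma_{001}^2$. The leading term is a sum of three independent, centered, i.i.d.\ averages; applying the Lindeberg–Feller (or Lyapunov) CLT to each and combining via independence gives the asymptotic normality of $(U_{n_1,n_2,n_3}-\theta)/S_{n_1,n_2,n_3}$.

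\textbf{Plan for (b), the variance estimator.} The key is the well‑known fact (going back to Arvesen) that the jackknife pseudo‑value $V_{i,0,0}$ has the same leading linear term as the first projection: $V_{i,0,0}=\theta+m_1 h_{100}(\bold X_i)+r_{i}$, where the remainder $r_i$ has variance of smaller order. Substituting into $\frac{1}{n_1(n_1-1)}\sum_i (V_{i,0,0}-\bar V_{\cdot,0,0})^2$ shows this term equals $\frac{m_1^2\sigma_{100}^2}{n_1}(1+o_p(1))$ plus a term that is $o_p(n_1^{-1})$; the same holds for the $Y$‑ and $Z$‑blocks. Summing the three blocks gives $\hat\sigma^2 - S_{n_1,n_2,n_3}^2 = o_p(\min\{n_1,n_2,n_3\}^{-1})$.

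\textbf{Main obstacle.} The delicate point is the uniform control of the jackknife remainders $r_i$ (and their $Y,Z$ analogues): one must show $\frac{1}{n_1^2}\sum_{i=1}^{n_1}r_i^2=o_p(n_1^{-1})$, i.e.\ that the average squared remainder is $o_p(1)$. This requires bounding the second‑order terms in the H‑decomposition of the deleted statistics uniformly over the deleted index, for which I would use the variance bounds on the degenerate components together with the assumption that $h$ is square‑integrable (so that $E[h^2]<\infty$); a moment bound on $\max_i r_i^2$ via a maximal inequality, or a direct $L^1$ bound on $\frac1{n_1}\sum_i r_i^2$, closes the argument. Everything else is bookkeeping with the H‑decomposition.
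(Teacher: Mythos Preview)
Your sketch is correct and follows the classical route (reverse-martingale SLLN, Hoeffding projection for the CLT, Arvesen's linearization of the jackknife pseudo-values for the variance estimator). The paper, however, does not actually prove this theorem: for (a) and the CLT in (b) it simply refers to pages 151--153 of Korolyuk and Borovskich, and for the statement $\hat\sigma^2-S_{n_1,n_2,n_3}^2=o_p((\min n_i)^{-1})$ it says only that this ``can be easily obtained using Chebyshev's inequality'' and cites Pan et al. So there is no in-paper argument to compare against; your outline is essentially what one would find in those references, and your identification of the remainder control $\frac{1}{n_1^2}\sum_i r_i^2=o_p(n_1^{-1})$ as the substantive step is exactly the content hidden behind the paper's ``Chebyshev'' remark.

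One small slip to fix: the variance of the degenerate remainder $R_{n_1,n_2,n_3}$ is $O\big((\min\{n_1,n_2,n_3\})^{-2}\big)$ (more precisely a finite sum of terms of order $(n_in_j)^{-1}$, $n_i^{-2}$, etc.), not $O\big((\max\{n_i\})^{-2}\big)$. Your stated bound is strictly stronger and fails when the sample sizes are unbalanced, since e.g.\ the $(1,1,0)$ component contributes order $(n_1n_2)^{-1}$, which can be much larger than $(\max n_i)^{-2}$. The weaker (correct) bound still gives $R_{n_1,n_2,n_3}=o_p(S_{n_1,n_2,n_3})$, so the argument goes through once you replace $\max$ by $\min$.
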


For the proof of part (a) of the above theorem, see pp.~151--153 of \cite{korolyuk}. The proof of part (b), i.e., the asymptotic normality of $U_{n_{1}, n_{2}, n_{3}}$, 
follows from \cite{puri1971nonparametric}. 
\\

Since \(\sigma^2\) is unknown, we find its  consistent estimator. Following steps outline how to define a jackknife pseudo-value-based estimator.  Consider\\
(1)  the U statistics based on all observations $U_{n_{1}, n_{2}, n_{3}}$, as defined by \eqref{ustat};\\
(2) for $i=1,2,\dots,n_1,$ the U statistics after deleting observation $X_{i}$, is given by
\begin{small}
	\begin{equation}\label{U-i}
		U_{n_{1}-1, n_{2}, n_{3}}^{(-i)}=
		\frac{1}{{n_1-1\choose m_1}{n_2\choose m_2}{n_3\choose m_3}}  \mathop{\sum\sum\sum}_{\substack{1 \leq i_{1}<\ldots<i_{m_{1}} \leq n_{1}-1 \\ i_{l} \ne i,\;l=1,2,\dots,m_1, \\ 1 \leq j_{1}<\ldots<j_{m_{2}} \leq n_{2} \\ 1 \leq k_{1}<\ldots<k_{m_{3}} \leq n_{3} }} \hskip-0.2in h\left(\bold{X}_{i_{1}}, \ldots, \bold{X}_{i_{m_{1}}}, \bold{Y}_{j_{1}}, \ldots, \bold{Y}_{j_{m_{2}}}, \bold{Z}_{k_{1}} \ldots, \bold{Z}_{k_{m_{3}}}\right);
	\end{equation}
\end{small}\\
(3) for $j=1,2,\dots,n_2,$ the U statistics after deleting observation $Y_{j}$, is given by
\begin{small}
	\begin{equation}\label{U-j}
		U_{n_{1}, n_{2}-1, n_{3}}^{(-j)}=\frac{1}{{n_1\choose m_1}{n_2-1\choose m_2}{n_3\choose m_3}}  \mathop{\sum\sum\sum}_{\substack{1 \leq i_{1}<\ldots<i_{m_{1}} \leq n_{1} \\ 1 \leq j_{1}<\ldots<j_{m_{2}} \leq n_{2}-1 \\ j_{l} \ne j,\;l=1,2,\dots,m_2, \\ 1 \leq k_{1}<\ldots<k_{m_{3}} \leq n_{3} }} \hskip-0.2in  h\left(\bold{X}_{i_{1}}, \ldots, \bold{X}_{i_{m_{1}}}, \bold{Y}_{j_{1}}, \ldots, \bold{Y}_{j_{m_{2}}}, \bold{Z}_{k_{1}} \ldots, \bold{Z}_{k_{m_{3}}}\right);
	\end{equation}
\end{small}\\
(4) for $k=1,2,\dots,n_3,$ the U statistics after deleting observation $Z_{k}$, is given by
\begin{small}
	\begin{equation}\label{U-k}
		U_{n_{1}, n_{2}, n_{3}-1}^{(-k)}=\frac{1}{{n_1\choose m_1}{n_2\choose m_2}{n_3-1\choose m_3}}  \mathop{\sum\sum\sum}_{\substack{1 \leq i_{1}<\ldots<i_{m_{1}} \leq n_{1} \\ 1 \leq j_{1}<\ldots<j_{m_{2}} \leq n_{2} \\ 1 \leq k_{1}<\ldots<k_{m_{3}} \leq n_{3}-1 \\ k_{l} \ne k,\;l=1,2,\dots,m_3 }} \hskip-0.2in  h\left(\bold{X}_{i_{1}}, \ldots, \bold{X}_{i_{m_{1}}}, \bold{Y}_{j_{1}}, \ldots, \bold{Y}_{j_{m_{2}}}, \bold{Z}_{k_{1}} \ldots, \bold{Z}_{k_{m_{3}}}\right).
	\end{equation}
\end{small}
\noindent The jackknife pseudo-values for individual samples are defined as,
\begin{align}\label{V_i00}
	&V_{i, 0,0}=n_{1} U_{n_{1}, n_{2}, n_{3}}-\left(n_{1}-1\right) U_{n_{1}-1, n_{2}, n_{3}}^{(-i)},\; \;\forall \; \; i=1,2,\dots,n_1,\\ \label{V_0j0}
	&V_{0, j, 0}=n_{2} U_{n_{1}, n_{2}, n_{3}}-\left(n_{2}-1\right) U_{n_{1}, n_{2}-1, n_{3}}^{(-j)},\; \;\forall \; \; j=1,2,\dots,n_2
\end{align}
\text{and}
\begin{align}\label{V_00k}
	&V_{0,0, k}=n_{3} U_{n_{1}, n_{2}, n_{3}}-\left(n_{3}-1\right) U_{n_{1}, n_{2}, n_3-1}^{(-k)},\; \;\forall \; \; k=1,2,\dots,n_3.
\end{align}
\vspace{2mm}

\noindent It is easy to verify that $
\bar{V}_{\cdot, 0,0}=\bar{V}_{0, \cdot, 0}=\bar{V}_{0,0, \cdot}=U_{n_{1}, n_{2}, n_{3}}$,
where $$\bar{V}_{\cdot, 0,0}=\frac{1}{n_1}\sum_{i=1}^{n_1}V_{i,0,0},\;\; \bar{V}_{0,\cdot, 0}=\frac{1}{n_2}\sum_{j=1}^{n_2}V_{0,j,0} \; \text{ and } \; \bar{V}_{0,0,\cdot}=\frac{1}{n_3}\sum_{k=1}^{n_3}V_{0,0,k} .$$

The jackknife pseudo-values based estimator of $\sigma^2$, proposed by \cite{arvesen1969jackknifing} (also see \cite{wang2010empirical}, \cite{guangming2013nonparametric} and \cite{an2018jackknife}),is  given as

\begin{align} \label{singma_hat}
	\hat{\sigma}^{2}=: \frac{1}{n_{1}\left(n_{1}-1\right)} \sum_{i=1}^{n_{1}}\left(V_{i, 0,0}-\bar{V}_{\cdot, 0,0}\right)^{2}
	& +\frac{1}{n_{2}\left(n_{2}-1\right)} \sum_{j=1}^{n_{2}}\left(V_{0, j, 0}-\bar{V}_{0,\cdot, 0}\right)^{2} \nonumber\\
	&+\frac{1}{n_{3}\left(n_{3}-1\right)} \sum_{k=1}^{n_{3}}\left(V_{0,0, k}-\bar{V}_{0,0, \cdot}\right)^{2},
\end{align}
where $V_{i,0,0}$, $V_{0,j,0}$ and $V_{0,0,k}$ are defined in equations \eqref{V_i00}, \eqref{V_0j0} and \eqref{V_00k}, respectively.\\

\begin{lemma} \label{lemma of consistency}
\noindent Suppose that $\sigma_{100}^{2}, \sigma_{010}^{2}, \sigma_{001}^{2} > 0$. Then  

$$\hat{\sigma}^{2} - \sigma^{2} = o_{p}\!\left(\min\{n_{1}, n_{2}, n_{3}\}^{-1}\right),$$
where $\sigma^{2} = \frac{m_{1}^{2}\sigma_{100}^{2}}{n_{1}} + \frac{m_{2}^{2}\sigma_{010}^{2}}{n_{2}} + \frac{m_{3}^{2}\sigma_{001}^{2}}{n_{3}},$
and $\hat{\sigma}^{2}$ is defined in \eqref{singma_hat}.
\end{lemma}

Proof: The proof can be easily derived using Chebyshev’s inequality and is therefore omitted (see \cite{guangming2013nonparametric}).
\\

\begin{remark}~

(i) From Theorem \ref{lemma1} we get that $U_{n_1,n_2,n_3}$ is  a consistent estimator of $\theta$.

(ii) From Lemma \ref{lemma of consistency} we get that $\hat{\sigma}^{2}$ is a consistent estimator of $\sigma^{2}$. 

(iii) From Theorem \ref{lemma1} and Lemma \ref{lemma of consistency},it follows that a two-sided $(1-\alpha)$ level asymptotic  CI for $\theta$  is given by 
\begin{equation*}
	\left(U_{n_1,n_2,n_3}-z_{\alpha / 2} \hat{\sigma}, U_{n_1,n_2,n_3}+z_{\alpha / 2} \hat{\sigma}\right),
\end{equation*}

where $z_{\alpha}$ is the upper $\alpha$-th  percentile point of a standard normal distribution.\\

\end{remark}

\vspace*{1mm}

\subsection{Three sample multivariate  U-Statistics - JEL}

\label{CI on JEL}

 \cite{jing2009jackknife} showed  that the JEL method, based on one and two-sample U-statistics, outperforms the normal approximation method.
We study   generalization of the JEL method to accommodate q-dimensional U-statistics
arising from three  samples.  By broadening the applicability of JEL beyond conventional one and two-sample univariate settings, our approach enhances
its relevance to vector valued three-sample statistical data analysis.
\\
 
 We  combine the three samples and treat them as a single sample. For this purpose,  let $n=n_{1}+n_{2}+n_{3}$ and define $\bold{W}_{l}$ as
$$
\bold{W}_{l}= \begin{cases} \bold{X}_{l}, & \text { if } l=1, \dots, n_{1}  \\ \bold{Y}_{l-n_{1}}, & \text { if } l=n_{1}+1, \dots, n_{1}+n_{2} \\ \bold{Z}_{l-n_{1}-n_{2}}, & \text { if } l=n_{1}+n_{2}+1, \dots, n \end{cases}.
$$

Therefore, $(\bold{W}_1,\bold{W}_2,\dots,\bold{W}_n)=(\bold{X}_1,\dots,\bold{X}_{n_1},\bold{Y}_1,\dots,\bold{Y}_{n_2},\bold{Z}_1,\dots,\bold{Z}_{n_3})$, where $\bold{W}_i's$ are independent but need not be identically distributed. Then, the U-statistic based on single sample $(\bold{W}_1,\bold{W}_2,\dots,\bold{W}_n)$ with degree $m=m_{1}+m_{2}+m_{3}$ is given as follows:
\begin{equation} \label{ustatwhole}
	U_n=\frac{1}{{n\choose m}}\mathop{\sum}_{\substack{1 \leq l_{1}<\ldots<l_{m} \le n}} \tilde{h}\left(\bold{W}_{l_{1}}, \bold{W}_{l_2}, \ldots, \bold{W}_{l_{m}}\right),
\end{equation}
where
\begin{small}
	\\~\\$ \tilde{h}\left(\bold{W}_{l_{1}}, \bold{W}_{l_2}, \ldots, \bold{W}_{l_{m}}\right)$
	$$ = \begin{cases}
		&\text{if} \;\;  1 \leq i_{1}<\ldots<i_{m_{1}} \leq n_{1} \\ c^{*}\;  h\left(\bold{W}_{i_{1}},\dots,\bold{W}_{i_{m_{1}}}, \bold{W}_{j_1}, \ldots,\bold{W}_{j_{m_{2}}},\bold{W}_{k_{1}},\dots, \bold{W}_{k_{m_3}}\right), &  n_1+1 \leq j_{1}<\ldots<j_{m_{2}} \leq n_1+ n_{2} \\ &n_1+ n_{2}+1 \leq k_{1}<\ldots<k_{m_{3}} \leq n\\
		0, &\text{otherwise}
	\end{cases}
	$$
\end{small}
with $c^{*}=\frac{{n\choose m}}{{n_1\choose m_1}{n_2\choose m_2}{n_3\choose m_3}}$. Note that $U_{n_{1}, n_{2}, n_{3}}=U_n$.

\vspace*{2mm}

Consider the U-statistic after deleting observation $\bold{W}_{l}$, from sample $(\bold{W}_1,\bold{W}_2,\dots,\bold{W}_n)$, given by
\begin{equation} \label{ustat_n-1}
	U_{n-1}^{(-l)}=\frac{1}{{n-1\choose m}}\mathop{\sum}_{\substack{1 \leq l_{1}<\ldots<l_{m} \le n\\ l_i\ne l,\; i=1,2,\dots,m}} \tilde{h}\left(\bold{W}_{l_{1}}, \bold{W}_{l_2}, \ldots, \bold{W}_{l_{m}}\right),\;\; \forall \; l=1,2,\dots,n.
\end{equation}

\noindent Define the jackknife pseudo-values $V_l$, for combined  sample $(\bold{W}_1,\bold{W}_2,\dots,\bold{W}_n)$ as
\begin{equation} \label{pseudovalueswhole}
	V_{l}=n U_{n}-(n-1) U_{n-1}^{(-l)},\;\; \forall\; l=1,2,\dots,n.
\end{equation}
It is easy to verify that $ U_n=\frac{1}{n} \sum_{i=1}^{n} V_{i}, $ where $V_{i}$ 's are asymptotically independent random variables (\cite{shi1984approximate}).\\

\noindent The expected values of $V_l's$ are given below (details are provided in Appendix A).
\begin{equation} \label{E[Vi]}
	E V_l= \begin{cases}\theta\left(\frac{n}{n-m}\right)\left[n-m -\frac{(n-1)(n_1-m_1)}{n_1}\right], & \text { if }\; l=1,2, \ldots, n_{1}  \\~\\ \theta\left(\frac{n}{n-m}\right)\left[n-m -\frac{(n-1)(n_2-m_2)}{n_2}\right],  & \text { if } \;l=n_{1}+1, n_{1}+2, \ldots, n_{1}+n_{2} \\~\\
		\theta\left(\frac{n}{n-m}\right)\left[n-m -\frac{(n-1)(n_3-m_3)}{n_3}\right] , & \text { if }\; l=n_{1}+n_{2}+1, \ldots, n \end{cases}.
\end{equation}
It is easy to verify that $\frac{1}{n} \sum_{l=1}^{n} EV_l=\theta$ and, in particular, if $m_{1}=m_{2}=m_{3}$ and $n_{1}=n_{2}=n_{3}$, we get $E V_l=\theta$, $\forall\; l=1,2, \ldots, n$.\\

Finally, we employ the EL method to construct the confidence intervals for $\theta$ by maximizing the empirical likelihood function subject to the constraints based on jackknife pseudo values $V_l'$s. Let $\mathbf{p}=\left(p_{1}, \ldots, p_{n}\right)$ be a probability vector assigning probability $p_{i}$ to each $V_{i}$. The JEL evaluated at $\theta$ becomes
\begin{equation}
	L(\theta)=\max \left\{\prod_{i=1}^{n} p_{i} ; \sum_{i=1}^{n} p_{i}=1, \sum_{i=1}^{n} p_{i}\left(V_{i}-EV_{i}\right)=0\right\}.
\end{equation}

We know that $\prod_{i=1}^{n} p_{i}$ subject to $\sum_{i=1}^{n} p_{i}=1$ attain its maximum at $\frac{1}{n^{n}}$ at $p_{i}=\frac{1}{n}$. Hence the jackknife empirical likelihood ratio for $\theta$ is given by

\begin{equation}
	R(\theta)=\frac{L(\theta)}{n^{-n}}=\max \left\{\prod_{i=1}^{n}\left(n p_{i}\right) ; \sum_{i=1}^{n} p_{i}=1, \sum_{i=1}^{n} p_{i}\left(V_{i}-EV_{i}\right)=0\right\} .
\end{equation}

\noindent Using Lagrange multipliers method, when $$\min _{1 \leq i \leq n}\left(V_{i}-EV_{i}\right)<0<\max _{1 \leq i \leq n}\left(V_{i}-EV_{i}\right),$$ we obtain

\begin{equation} \label{p}
	p_{i}=\frac{1}{n} \frac{1}{1+\lambda\left(V_{i}-EV_{i}\right)}, \;\; \forall \; \; i=1,2,\dots,n,
\end{equation}
where $\lambda$ satisfies

\begin{equation}\label{lambda}
	\frac{1}{n} \sum_{i=1}^{n} \frac{V_{i}-EV_{i}}{1+\lambda\left(V_{i}-EV_{i}\right)}=0.
\end{equation}
Hence we obtain the jackknife empirical log-likelihood ratio as

$$
\log R(\theta)=-\sum_{i=1}^{n} \log \left(1+\lambda\left(V_{i}-EV_{i}\right) \right).
$$
We can use $\log R(\theta)$ for constructing confidence intervals or developing test on $\theta$. For this, we need to obtain the asymptotic distribution of  $\log R(\theta)$. We next derive the limiting distribution of the jackknife empirical log-likelihood ratio as an analog to Wilk's theorem.  The proof of the following theorem is given in Appendix B.

\noindent \begin{theorem} \label{three sample} Suppose that following conditions holds:
	
	\noindent (a) $E\left(h^{2}\left(\bold{X}_{1}, \ldots, \bold{X}_{m_1}, \bold{Y}_{1}, \ldots, \bold{Y}_{m_2}, \bold{Z}_{1} \ldots, \bold{Z}_{m_3}\right)\right)<\infty$ and $\sigma_{100}^{2}, \sigma_{010}^{2}, \sigma_{001}^{2}>0$,
	\vspace*{1.5mm}
	
	\noindent (b) $0<\lim \inf n_{1} / n_{2} \leq \lim \sup n_{1} / n_{2}<\infty$ and  $0<\lim \inf n_{2} / n_{3} \leq \lim \sup n_{2} / n_{3}<\infty$.  \\
	
	\noindent Then, as $n \rightarrow \infty,-2 \log R(\theta)$ converges in distribution to a $\chi^{2}$ random variable with one degree of freedom.
\end{theorem}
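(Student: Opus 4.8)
The plan is to follow the standard jackknife empirical likelihood argument of \citet{jing2009jackknife}, adapted to the three-sample setting, by reducing the problem to an empirical-likelihood statement for the (asymptotically independent) pseudo-values $V_1,\dots,V_n$ centered at their means $EV_l$. Write $Z_l=V_l-EV_l$. The key is that the jackknife pseudo-values behave asymptotically like an i.i.d.\ (within each block) triangular array with a common first-order Hájek projection structure, so that $\bar Z=\frac1n\sum_l Z_l = U_n-\theta + o_p(n^{-1/2})$ and $\frac1n\sum_l Z_l^2 \xrightarrow{p} \sigma^2_\ast$ for a positive constant $\sigma^2_\ast$. First I would establish three preliminary facts: (i) a central limit theorem $\sqrt n\,\bar Z/\sigma_\ast \xrightarrow{d} N(0,1)$, which follows from Theorem~\ref{lemma1}(b) together with the verification that $\bar Z = U_n-\theta$ up to the negligible discrepancy coming from $\frac1n\sum_l EV_l = \theta$ and the block-dependent correction terms in \eqref{E[Vi]} (note those corrections are $O(n^{-1})$ per coordinate, hence contribute $O(n^{-1})$ to $\bar Z$, negligible at scale $n^{-1/2}$); (ii) $S_n^2:=\frac1n\sum_{l=1}^n Z_l^2 \xrightarrow{p}\sigma^2_\ast>0$, using the pseudo-value variance decomposition behind $\hat\sigma^2$ in \eqref{singma_hat} and condition (b) that the sample sizes are of the same order; and (iii) $\max_{1\le l\le n}|Z_l| = o_p(n^{1/2})$, which comes from $E V_l^2 = O(1)$ (a consequence of condition (a), $Eh^2<\infty$) plus a union bound, and guarantees the Lagrange solution \eqref{p} exists with all $p_i>0$ w.h.p.

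Next I would run the usual expansion of the Lagrange multiplier. From \eqref{lambda}, writing $g(\lambda)=\frac1n\sum_l \frac{Z_l}{1+\lambda Z_l}=0$, a Taylor expansion around $\lambda=0$ combined with (i)--(iii) yields the standard bound $|\lambda| = O_p(n^{-1/2})$, and then $\lambda = S_n^{-2}\bar Z + o_p(n^{-1/2})$. Substituting into $\log R(\theta) = -\sum_{l=1}^n \log(1+\lambda Z_l)$ and expanding the logarithm to second order (the third-order remainder is controlled by $\max_l|Z_l|\cdot\sum_l Z_l^2\cdot|\lambda|^3 = o_p(1)$) gives
\[
-2\log R(\theta) = n\,\frac{\bar Z^2}{S_n^2} + o_p(1) = \Bigl(\frac{\sqrt n\,\bar Z}{\sigma_\ast}\Bigr)^2\cdot\frac{\sigma_\ast^2}{S_n^2} + o_p(1).
\]
By Slutsky's theorem, (i) and (ii) give $-2\log R(\theta)\xrightarrow{d}\chi^2_1$.

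The main obstacle — and the part needing genuine care rather than routine calculation — is step (i): showing $\sqrt n(\bar Z - (U_n-\theta)) = o_p(1)$ in the present multi-sample, multivariate setting with the asymmetric pseudo-values of \eqref{V_i00}--\eqref{V_00k} pooled into a single vector via the $\tilde h$ device. One must confirm that the leave-one-out statistic $U_{n-1}^{(-l)}$ built from the pooled kernel $\tilde h$ reproduces exactly the appropriate block-specific deleted U-statistic (so that $V_l$ coincides with $V_{i,0,0}$, $V_{0,j,0}$, or $V_{0,0,k}$ as appropriate), and that the Hoeffding decomposition of the three-sample U-statistic makes the higher-order terms negligible uniformly — this is where condition (b) on the commensurability of $n_1,n_2,n_3$ is essential, since it ensures each block contributes a comparable $\Theta(n^{-1})$ share to the variance and that no single block degenerates. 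A secondary technical point is verifying the $o_p(\min\{n_i\}^{-1})$-type control on $\hat\sigma^2 - S_{n_1,n_2,n_3}^2$ transfers to $S_n^2$ as defined from the $Z_l$; this is handled by the same Chebyshev argument cited after Theorem~\ref{lemma1} and I would simply invoke it. I would defer these verifications to Appendix~B and present the main-text proof in the streamlined four-line form above.
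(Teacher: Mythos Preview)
Your overall architecture matches the paper's proof almost exactly (Lagrange-multiplier bound, second-order Taylor expansion, Slutsky), but two of your preliminary facts are misstated in ways that matter.

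First, what you flag as ``the main obstacle'' is not an obstacle at all: $\bar Z = U_n-\theta$ holds \emph{exactly}, since the paper verifies the algebraic identities $\frac1n\sum_l V_l = U_n$ and $\frac1n\sum_l EV_l = \theta$. There is no $o_p(n^{-1/2})$ discrepancy to control. Relatedly, your expectation that $V_l$ ``coincides with $V_{i,0,0}$, $V_{0,j,0}$, or $V_{0,0,k}$'' is false; the pooled pseudo-value $V_l$ is a block-dependent \emph{affine combination} of $U_n$ and the corresponding block pseudo-value (the paper derives this explicitly in Appendix~A, equation~\eqref{Vi}), and it is precisely this relation that drives the key variance computation.

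Second, and more consequential, your step~(ii) claims $S_n^2\xrightarrow{p}\sigma_\ast^2$ for a \emph{fixed} positive constant $\sigma_\ast^2$. Under condition~(b) as stated---only $\liminf$/$\limsup$ bounds on $n_1/n_2$ and $n_2/n_3$---the ratios $n/n_i$ need not converge, so $nS_{n_1,n_2,n_3}^2$ can oscillate and no single $\sigma_\ast^2$ exists. This breaks your final Slutsky step as written, because neither $\sqrt n\,\bar Z/\sigma_\ast\xrightarrow{d}N(0,1)$ nor $\sigma_\ast^2/S_n^2\xrightarrow{p}1$ is available. The paper circumvents this by proving instead (Lemma~\ref{lemma3}) that $S_n^2 = nS_{n_1,n_2,n_3}^2 + o_p(1)$ and then writing $-2\log R(\theta) = (U_n-\theta)^2/S_{n_1,n_2,n_3}^2 + o_p(1)$, so that Theorem~\ref{lemma1}(b) applies directly in self-normalized form. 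Replace your (ii) with this statement and drop $\sigma_\ast$ from the final display, and your argument goes through; the genuine work you should defer to the appendix is establishing $S_n^2 = nS_{n_1,n_2,n_3}^2 + o_p(1)$ via the affine relation above, not the CLT step.
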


\vspace*{1mm}

\section{\textbf{Application: Volumes under ROC surfaces}} \label{CI for VUS}

The receiver operating characteristic (ROC) curve and its associated area under the curve (AUC) are fundamental tools for assessing the diagnostic accuracy of binary classifiers,  summarizing their ability to discriminate between two outcome states across all possible threshold values. In medical science,  ROC analysis is routinely used to evaluate diagnostic tests, compare biomarkers, and determine optimal thresholds for disease detection (\cite{yang2024transformed}). 

 However,  in medical diagnostics, disease severity may exist on a continuum—such as healthy, intermediate, and diseased — rather than fitting neatly into a binary framework. For example, 
consider levels of three biomarkers used to assess the risk of developing cardiovascular disease: let \( X \) be the level of Biomarker 1 (e.g., LDL cholesterol), \( Y \) be the level of Biomarker 2 (e.g., HDL cholesterol), and \( Z \) be the level of Biomarker 3 (e.g., triglycerides). We are interested in finding the probability that a patient has an LDL cholesterol level \( X \) that is lower than their HDL cholesterol level \( Y \), and their HDL cholesterol level \( Y \) that is lower than their triglyceride level \( Z \), i.e., \( P(X < Y < Z) \). This probability, is called Volume under ROC surface and denoted by VUS. It  can be interpreted as a measure of how often the biomarker levels are ordered in a specific way that might be associated with different risk levels or health conditions (\cite{he2008}).\\

This extension  takes care of  the inherent limitations of binary classifications in medical diagnostics, where conditions often manifest along a spectrum rather than as discrete categories. \cite{tian2011exact}. \cite{wang2010empirical}, \cite{guangming2013nonparametric}, and \cite{an2018jackknife} considered non-parametric inference for VUS using the JEL method.\\

Suppose $X_{i}=\left(X_{i 1}, X_{i 2}\right)^{T}, i=1,2, \ldots, n_{1}, Y_{j}=\left(Y_{j 1}, Y_{j 2}\right)^{T}, j=1,2, \ldots, n_{2}$, and $Z_{k}=\left(Z_{k 1}, Z_{k 2}\right)^{T}, k=1,2, \ldots, n_{3},$ represent random samples of three populations having independent joint distribution functions $F, G$ and $H$, respectively. \\

\noindent The difference of two correlated VUS's can be defined as
$$\theta_V =P\left(X_{11}<Y_{11}<Z_{11}\right)-P\left(X_{12}<Y_{12}<Z_{12}\right).$$
In the following part of this subsection, we use three different non-parametric methods for finding  confidence intervals for $\theta_V$. We adopt the notations introduced in Sections \ref{CI for U statistics} and \ref{CI on JEL}, with $q = 2$ and $m_1 = m_2 = m_3 = 1$.

\vspace{6mm}

\noindent \textbf{U-statistic based confidence intervals for $\theta_V$:}\\
\noindent 
 We can rewrite $\theta_V$ as
$$\begin{aligned} \theta_V
	& =E\left[I\left(X_{11}<Y_{11}<Z_{11}\right)-I\left(X_{12}<Y_{12}<Z_{12}\right)\right],
\end{aligned}
$$
where $I\left(x<y<z\right)=\begin{cases}
	1, &\text{if } \; x<y<z\\
	0, &\text{ otherwise}
\end{cases}$.

\noindent Consider the symmetric kernel $h^{VUS}$ given by
$$h^{VUS}\left(\left(X_{i 1}, X_{i 2}\right),\left(Y_{j 1}, Y_{j 2}\right),\left(Z_{k 1}, Z_{k 2}\right)\right)=I\left(X_{i 1}<Y_{j 1}<Z_{k 1}\right)-I\left(X_{i 2}<Y_{j 2}<Z_{k 2}\right). $$
An estimator of $\theta_V$ can be found as U-statistic with kernel $h$ as
\begin{equation*}
	U_{n_1,n_2,n_3}^{VUS}=\frac{1}{n_{1} n_{2} n_{3}} \sum_{i=1}^{n_{1}} \sum_{j=1}^{n_{2}} \sum_{k=1}^{n_{3}} h^{VUS}\left(\left(X_{i 1}, X_{i 2}\right),\left(Y_{j 1}, Y_{j 2}\right),\left(Z_{k 1}, Z_{k 2}\right)\right).
\end{equation*}

The U-statistic $U_{n_1,n_2,n_3}^{VUS}$ is an unbiased and consistent estimator of $\theta$ (Lehmann, (1951)). Also, using Theorem \ref{lemma1}, a two sided asymptotic $(1-\alpha)$ level CI for $\theta_V$ based on the asymptotic distribution of $U_{n_1,n_2,n_3}^{VUS}$ is
\begin{equation} \label{eq:U stat CI}
	\left(U_{n_1,n_2,n_3}^{VUS}-z_{\alpha / 2} \hat{\sigma}, U_{n_1,n_2,n_3}^{VUS}+z_{\alpha / 2} \hat{\sigma}\right),
\end{equation}
where $\hat{\sigma}$ is the jackknife pseudo-values based consistent estimator of $Var(U_{n_1,n_2,n_3}^{VUS})$ (see \cite{arvesen1969jackknifing}), as defined by \eqref{singma_hat}. Note that this confidence interval was also considered in \cite{an2018jackknife}.

\vspace{6mm}

\noindent \textbf{\textbf{JEL based confidence interval for $\theta_V$:}}\\
We use Theorem \ref{three sample} to find JEL based confidence intervals for $\theta_V$. We can define $\log R(\theta_V)$ on the same lines as of Section \ref{CI on JEL}. Hence,  the JEL based confidence interval for $\theta_V$ at level $(1-\alpha)$ is given by
\begin{equation} \label{eq:JEL CI}
    CI =\{ \theta_V\;\vert \;-2 \log R(\theta_V)\leq \chi^2_{1,1-\alpha}\}.
\end{equation}
Note that JEL based confidence interval was also considered in \cite{an2018jackknife}.

\vspace{6mm}


\noindent \textbf{Kernel based confidence interval for $\theta_V$:}
To estimate the difference of two VUS's \cite{kang2013} proposed the kernel-smoothed version of the difference of two VUS's as follows;
\begin{small}
		\begin{equation} \label{eq:KB}
		{\hat{\theta}_V}=\frac{1}{n_{1} n_{2} n_{3}} \sum_{i=1}^{n_{1}} \sum_{j=1}^{n_{2}} \sum_{k=1}^{n_{3}}\left( \Phi\left(\frac{Y_{1 j}-X_{1 i}}{\sqrt{b_{1}^{2}+b_{2}^{2}}}\right) \Phi\left(\frac{Z_{1 k}-Y_{1 j}}{\sqrt{b_{2}^{2}+b_{3}^{2}}}\right)-  \Phi\left(\frac{Y_{2 j}-X_{2 i}}{\sqrt{b_{4}^{2}+b_{5}^{2}}}\right) \Phi\left(\frac{Z_{2 k}-Y_{2 j}}{\sqrt{b_{5}^{2}+b_{6}^{2}}}\right) \right),
	\end{equation}
\end{small}
\noindent where $\Phi(\cdot)$ represents the standard normal distribution function and the bandwidths are calculated by $$b_{i}=0.9 \min \left\{s d\left(x_{i}\right)\right., \left.\operatorname{iqr}\left(x_{i}\right) / 1.34\right\} n_{i}^{-0.2},\;\;\;\;i=1,2,\dots,6,$$
where $s d(\cdot)$ and $\operatorname{iqr}(\cdot)$ represent the standard deviation and the interquartile range. \cite{silverman2018}  introduced the bandwidth  $b_i$ which controls the level of smoothing in kernel estimation. This asymptotic bandwidth selection method has proven effective across a wide range of density functions and has been widely adopted by researchers in diagnostic studies.\\

We use the estimator (\ref{eq:KB}) and bootstrap method to obtain a kernel-based confidence interval for $\theta_V$. 
In each bootstrap replication, we compute a point estimate $\hat{\theta}_V$ of $\theta_V$, 
which allows us to construct the confidence interval using the percentile method. The kernel-based bootstrap confidence interval for $\theta_V$ is obtained in Step 5 of Algorithm \ref{alg:KB CI}.
This approach is commonly used for constructing confidence intervals; 
see \cite{polansky2001bandwidth}, \cite{feinberg2024kernel}, and \cite{lin2025bootstrap}.

\begin{algorithm}
\caption{Bootstrap Percentile CI for a Kernel-Based Estimator of $\theta_V$}
\label{alg:KB CI}
\begin{algorithmic}[1]
  \State Compute the original estimate $\hat{\theta}_V$ (as defined by \eqref{eq:KB}).
  
  \For{$b = 1,\dots,B$}
    \Statex \textbf{a:} Draw bootstrap samples 
      $(X_1^{*(b)},\dots,X_{n_1}^{*(b)})$, 
      $(Y_1^{*(b)},\dots,Y_{n_2}^{*(b)})$ and 
      $(Z_1^{*(b)},\dots,Z_{n_3}^{*(b)})$ 
      by sampling with replacement from 
      $\{X_1,\dots,X_{n_1}\}$, $\{Y_1,\dots,Y_{n_2}\}$ and 
      $\{Z_1,\dots,Z_{n_3}\}$, respectively.
    \Statex \textbf{b:} Based on these bootstrap samples, compute the bootstrap estimate
      $\hat{\theta}_V^{*(b)}$ using \eqref{eq:KB}.
  \EndFor
  
  \State Sort $\hat{\theta}_V^{*(1)},\dots,\hat{\theta}_V^{*(B)}$ in ascending order to obtain
  $$\hat{\theta}^{*(1)}_{V} \le \cdots \le \hat{\theta}^{*(B)}_{V}.$$
  
  \State Let
  $$\hat{\theta}^*_{\alpha/2} = \hat{\theta}^{*(\lceil B (\alpha/2) \rceil)}_{V}, \quad
    \hat{\theta}^*_{1-\alpha/2} = \hat{\theta}^{*(\lceil B(1-\alpha/2)\rceil)}_{V},$$
  where $\lceil x\rceil$ is the smallest integer greater than or equal to $x$.
  
  \State \textbf{The bootstrap percentile $(1-\alpha)$-confidence interval for $\theta_V$ is
  \begin{equation} \label{eq:KB CI}
      \bigl[\hat{\theta}^*_{\alpha/2},\, \hat{\theta}^*_{1-\alpha/2}\bigr].
  \end{equation}}
\end{algorithmic}
\end{algorithm}

Another possible application of the methodology discussed in Section~\ref{Theory} is the assessment of a diagnostic test based on imaging features. Each patient is represented by a feature vector $\bold{x} \in \Re^q$, and the diagnostic procedure assigns a real-valued score via a function $s : \Re^q \to \Re$. Patients are to be classified into three ordered categories (e.g., healthy, mildly diseased, severely diseased) on the basis of this score.

Let $\bold{X}$, $\bold{Y}$, and $\bold{Z}$ denote independent observations from the healthy, mildly diseased, and severely diseased populations, respectively. The parameter of interest is
$$
  \theta_s = P\big(s(\bold{X}) < s(\bold{Y}) < s(\bold{Z})\big),$$
which is the probability that a randomly selected triplet, one patient from each group, is correctly ordered by the scoring function $s(\cdot)$. A natural point estimator of $\theta_s$ provides a summary measure of the accuracy of the diagnostic test, and one may also be interested in constructing a confidence interval for $\theta_s$.

\section{\textbf{Simulation}}

In this section, we employ Monte Carlo simulation studies to evaluate the finite sample performance of CIs for difference of two VUS's. The three methods used are the JEL, normal approximation with jackknife variance estimator and kernel based bootstrapping. We assess the performance of each method using two distinct criteria:

\noindent(a) Coverage Probability (CP): This represents the probability that the true parameter value falls within the constructed confidence interval (CI). Methods with lower coverage error—defined as the discrepancy between the actual coverage probability and the nominal value are preferred.\

\noindent(b) Average Length (AL) of CIs: Methods that produce CIs with shorter average lengths are desirable, as shorter intervals provide more precise information about the location of the unknown parameter.

We conduct simulation studies with different sample sizes and simulation is repeated  1,000 time.  We use R package ``empilik" for finding  JEL based confidence intervals. For constructing confidence intervals based on kernel based method we use 1000 bootstrap replications. Two bivariate probability models are considered for the simulation studies.
\vspace*{2mm}

\noindent \textbf{Marshall-Olkin bivariate exponential model (MOBVE($\lambda_{1},\lambda_{2},\lambda_{3}$)) :}
We consider the Marshall-Olkin bivariate exponential model MOBVE($\lambda_1, \lambda_2, \lambda_3$), introduced by \cite{marshall1967multivariate}. The joint cumulative distribution function of this bivariate model is given by
$$F_{X_1,X_2}(x_1,x_2)=1-\exp[-\lambda_1 x_1-\lambda_2 x_2 -\lambda_3 \max\{x_1,x_2\}],\;\; x_1>0,\; x_2>0,$$
where $\lambda_i>0,\;i=1,2,3.$
We generated sample sizes $n_1$, $n_2$, and $n_3$ from three  independent models MOBVE($\lambda_{x_1},\lambda_{x_2},\lambda_{x_3}$), MOBVE($\lambda_{y_1},\lambda_{y_2},\lambda_{y_3}$), and MOBVE($\lambda_{z_1},\lambda_{z_2},\lambda_{z_3}$), respectively.
\vspace*{2mm}

In Table \ref{table1}, we explored various configurations for sample sizes $(n_1,n_2,n_3)$ and the parameters $(\lambda_{x_1},\lambda_{x_2},\lambda_{x_3};\lambda_{y_1},\lambda_{y_2},\lambda_{y_3};\lambda_{z_1},\lambda_{z_2},\lambda_{z_3})$. This
comprehensive investigation aims to evaluate the performance and efficacy of these three methods
under various parameter settings within the MOBVE framework.
\\

\vspace*{1mm}

\noindent \textbf{Bivariate Pareto model :}
In this part, we explore the Farlie-Gumbel-Morgenstern (FGM) copula based on two independent Pareto distributions with C.D.F.s defined as follows:
$$F_{X_i}(x_i)= 1- \left(\frac{\lambda_i}{x_i}\right)^{\alpha_i},\; x_i\geq \lambda_i >0,\; \alpha_i>0,\;i=1,2.$$

\noindent For $-1 \leq \theta \leq 1$, the FGM copula, or joint C.D.F., is given by
\begin{eqnarray*}
	C_{\theta}(F_{X_1}(x_1),F_{X_2}(x_2))&=& F_{X_1}(x_1)F_{X_2}(x_2)[1+\theta (1-F_{X_1}(x_1))(1-F_{X_2}(x_2))],\\&&\qquad\qquad\qquad
	0\leq F_{X_1}(x_1), \, F_{X_2}(x_2) \leq 1.
\end{eqnarray*}

We refer to this bivariate model as the FGM-based Bivariate Pareto model (FGMPM), denoted as FGMPM($\lambda_1, \lambda_2, \alpha_1,\alpha_2,\theta$). For simulations, we generated sample sizes $n_1$, $n_2$, and $n_3$ from three FGMPM models:
$ FGMPM(\lambda_{x_1}, \lambda_{x_2}, \alpha_{x_1},\alpha_{x_2},\theta_x),\; FGMPM(\lambda_{y_1}, \lambda_{y_2}, \alpha_{y_1},\alpha_{y_2},\theta_y),$
and  $FGMPM(\lambda_{z_1}, \lambda_{z_2}, \alpha_{z_1},\alpha_{z_2},\theta_z),$
respectively. The samples were generated using the ``rCopula" function in the ``copula" package in R. Results of the simulation studies for this model are presented in Table \ref{table2} for various configurations of sample sizes $(n_1, n_2, n_3)$ and parameters $(\lambda_{x_1}, \lambda_{x_2}, \alpha_{x_1}, \alpha_{x_2}, \theta_x; \lambda_{y_1}, \lambda_{y_2}, \alpha_{y_1}, \alpha_{y_2}, \theta_y; \lambda_{z_1}, \lambda_{z_2}, \alpha_{z_1}, \alpha_{z_2}, \theta_z)$.

\vspace{2mm}

\noindent The simulations were conducted on a laptop with the following hardware configuration: Processor: Intel(R) Core(TM) i7-7500U CPU @ 2.70GHz, 4 cores; Memory: 8 GB DDR4 RAM; Storage: 240 GB SSD WD green; Operating System: Windows 10, 64-bit.

\noindent  The approximate time taken by the three methods for obtaining results during simulations is presented in Table \ref{time1}.

The following observations can be made from Tables \ref{time1}- \ref{table2}:\\

1. We observe that for various configurations of parameters of the Marshall-Olkin bivariate exponential and bivariate Pareto models, the average lengths (ALs) of the confidence intervals based on JEL, normal approximation, and kernel-based methods decrease as the sample sizes increase.\\

2. In most cases, as the sample size increases, the coverage probabilities of the confidence intervals generated by the three methods tend to align with the nominal level (95\%). This is anticipated because larger sample sizes yield more detailed data insights.\\

3. From Tables \ref{table1}-\ref{table2}, we can see that CI based on JEL method achieves closer to 95\% coverage across various choices of parameters compared to the other two methods. The average interval lengths are generally better for CIs based on the normal approximation and kernel-based methods for small sample sizes. However, the average lengths are almost same for moderate sample sizes.\\

4. From Table \ref{time1}, we can observe that the simulation time taken using the normal approximation with the jackknife variance estimator and kernel-based bootstrapping methods is significantly higher than the time taken by the JEL method.\\

Therefore, based on the simulation studies, we recommend the JEL method as it requires significantly less time to construct confidence intervals compared to the other two methods and also provides better coverage probability.
\\

Note that \cite{an2018jackknife} also conducted simulation studies for the confidence intervals defined in \eqref{eq:U stat CI} and \eqref{eq:JEL CI}. However, they did not consider the bivariate Pareto model and did not compare these methods with kernel-based confidence intervals.

\begin{table}[h]
	\caption{Simulation time taken by various methods under two models}
	\begin{tabular}{l l l l l}
		\hline
		Model &	\begin{tabular}[c]{@{}l@{}}Samples Size\\ $(n_1,n_2,n_3)$\end{tabular} & JEL         & Normal Approx & Kernel     \\
		\hline
		\multirow{2}{*}{MOBVE} &	$(10,10,10)$                                                            & 1.6 minutes & 1.95 minutes  & 24 minutes \\
		
		&	$(100,100,100)$                                                         & 40 hours    & 69 hours      & 240 hours \\
		\multirow{2}{*}{bivariate Pareto} &	$(10,10,10)$                                                            & 2.63 minutes & 4.6 minutes  & 50 minutes \\
		&	$(100,100,100)$                                                         & 135 hours    & 246 hours      & 656 hours \\
		\hline
	\end{tabular}
	\label{time1}
\end{table}


\FloatBarrier

\begin{table}[h]
	\footnotesize
	
	\caption{Coverage probability (CP) and average length (AL) of confidence intervals for the difference two VUS's at 95\% confidence level: Under MOBVE models}
	\centering
	\begin{adjustbox}{angle=90}
		\begin{tabular}{ll  |ll | ll | ll }
			\hline
			\multicolumn{1}{c}{\multirow{2}{*}{$(\lambda_{x_1},\lambda_{x_2},\lambda_{x_3};\lambda_{y_1},\lambda_{y_2},\lambda_{y_3};\lambda_{z_1},\lambda_{z_2},\lambda_{z_3})$}} & \multicolumn{1}{c}{\multirow{2}{*}{$(n_1,n_2,n_3)\;\;\;$}} & \multicolumn{2}{c}{JEL}  & \multicolumn{2}{c}{Normal Approx} & \multicolumn{2}{c}{Kernel}  \\ \cline{3-4} \cline{5-6} \cline{6-8}
			\multicolumn{1}{c}{}                                     & \multicolumn{1}{c}{}                            & \multicolumn{1}{c}{CP (\%)} & \multicolumn{1}{c}{AL} & \multicolumn{1}{c}{CP (\%)} & \multicolumn{1}{c}{AL} & \multicolumn{1}{c}{CP (\%)} & \multicolumn{1}{c}{AL} \\ \hline
			
			(1,1,1;1,1,1;1,1,1)	& (10,10,10)                                                &      96.8                      &    .501   & 97.8 & 0.384         & 97 & 0.392      \\
			&    (20,20,20)                                           &     96.5                         &      .332     & 96.8 & 0.325 &95.8 & 0.341             \\
			&  (30,30,30)                                               &  95.7                          &    .229        &97 & 0.257 & 96.6 & 0.264            \\
			&     (50,50,50)                                                &   95                      &      .18     & 96.6 & 0.183 & 96 & 0.195              \\
			
			&      (70,70,70)                                           &   94.4                          &    .126    & 94 & 0.124 & 93.5 & 0.129                \\
			&  (100,100,100)                                               &   95.8                          &       .102      & 94.2 & 0.103 & 94 & 0.122            \\

			&                                                 &                             &                        \\
			(1,2,0;1,1,0;2,1,0)	& (10,10,10)                                                &    97                         &       .564   & 92.8 & 0.516 & 91.8 & 0.504              \\
			&      (20,20,20)                                           &    96.4                        &     .35          & 93 & 0.361 & 92 & 0.341         \\
			&  (30,30,30)                                               &     95.8                        &     .248        & 92.5 & 0.239 & 93 & 0.25        \\
			&     (50,50,50)                                                &  95.4                       &   .187     & 94 & 0.187 & 92.5 & 0.169                \\
			&      (70,70,70)                                           &    95.2                         &   .157    & 94 & 0.187 & 92.5 & 0.158                 \\
			&  (100,100,100)                                               &         95.2                    &     .131     & 94.5 & 0.130 & 94 & 0.135              \\~\\
			(1/3,1/3,2/3;2/3,2/3,4/3;1,1,2)	& (10,10,10)                                                &    99.4                         &       .263      & 99.6 & 0.234 & 97.5 & 0.21           \\
			&      (20,20,20)                                           &    97                       &     .169  & 97.5 & 0.151 & 98.5 & 0.161                 \\
			&  (30,30,30)                                               &     96                       &     .112    & 97.5 & 0.121 & 98.25 & 0.125               \\
			&     (50,50,50)                                                &  95.6                     &   .085    & 95.75 & 0.096 & 97.5 & 0.089                 \\
			&      (70,70,70)                                           &    95.6                         &   .067       & 94.5 & 0.069 & 96.5 & 0.071              \\
			&  (100,100,100)                                               &      95.2                       &
			.053 & 94.8 & 0.058 & 96 & 0.062  \\
			\\~\\
			(3/5,3/5,2/5;6/5,6/5,4/5;9/5,9/5,6/5)	& (10,10,10)                                                &    99.2                         &       .319           & 98.75 & 0.269 & 96.5 & 0.235      \\
			&      (20,20,20)                                           &    95.2                       &     .2  & 97 & 0.197 & 97.25 & 0.211                 \\
			&  (30,30,30)                                               &     97                       &     .146  & 96.5 & 0.151 & 97 & 0.142                 \\
			&     (50,50,50)                                                &  96.4                     &   .103   & 96.5 & 0.098 & 97 & 0.106                  \\
			&      (70,70,70)                                           &    96                         &   .087     & 96 & 0.082 & 96.5 & 0.09                \\
			&  (100,100,100)                                               &      95.6                     &
			.066   & 96 & 0.069 & 96 & 0.065
			\\~\\
			(1/19,1/19,18/19;2/19,2/19,36/19;3/19,3/19,54/19)	& (10,10,10)                                                &                          98   &          0.09 & 86 & 0.07 & 88 & 0.079            \\
			&      (20,20,20)                                           &    95.2                       &       0.071   & 98 & 0.06 & 98 & 0.055             \\
			&  (30,30,30)                                               &     97                       &     0.048             & 99 & 0.052 & 94 & 0.049      \\
			&     (50,50,50)                                                &  91                     &   0.034   & 89 & 0.034 & 88.5 & 0.036                  \\
			&      (70,70,70)                                           &    90.4                         &   0.027         & 90 & 0.028 & 88 & 0.029             \\
			&  (100,100,100)                                               &      93.6                     &
			0.023 & 91 & 0.022 & 90  & 0.025  \\
			\hline
		\end{tabular}
	\end{adjustbox}
	\label{table1}
\end{table}
\FloatBarrier


\FloatBarrier
\begin{table}[h]
	\footnotesize
	\caption{Coverage probability (CP) and average length (AL) of confidence intervals for the difference two VUS's at 95\% confidence level: Under Bivariate Pareto models}
	\centering
	\begin{adjustbox}{angle=90}
		\begin{tabular}{ll  |ll | ll | ll }
			\hline
			\multicolumn{1}{c}{\multirow{2}{*}{$(\lambda_{x_1},\lambda_{x_2},\alpha_{x_1},\alpha_{x_2},\theta_x;\lambda_{y_1}, \lambda_{y_2}, \alpha_{y_1},\alpha_{y_2},\theta_y;\lambda_{z_1}, \lambda_{z_2}, \alpha_{z_1},\alpha_{z_2},\theta_z)$}} & \multicolumn{1}{c}{\multirow{2}{*}{$(n_1,n_2,n_3)\;\;\;$}} & \multicolumn{2}{c}{JEL}  & \multicolumn{2}{c}{Normal Approx} & \multicolumn{2}{c}{Kernel}  \\ \cline{3-4} \cline{5-6} \cline{6-8}
			\multicolumn{1}{c}{}                                     & \multicolumn{1}{c}{}                            & \multicolumn{1}{c}{CP (\%)} & \multicolumn{1}{c}{AL} & \multicolumn{1}{c}{CP (\%)} & \multicolumn{1}{c}{AL} & \multicolumn{1}{c}{CP (\%)} & \multicolumn{1}{c}{AL} \\ \hline
			
			(1,1,1,1,0.5;1,1,1,1,0.5;1,1,1,1,-0.5)	& (10,10,10)                                                &     96                        &  .493 & 96.3 & 0.451 & 98 & 0.485                  \\
			&      (20,20,20)                                           &  97                           &  .311            & 97 & 0.298 & 96.6 & 0.322          \\
			&  (30,30,30)                                               &    95                         &    .274     & 94.3 & 0.262 & 94.1 & 0.281               \\
			&     (50,50,50)                                                &     96.8                    &   .192     & 93.8 & 0.201 & 93.7 & 0.182                \\
			
			&      (70,70,70)                                           &     95.2                        &   .156   & 94.8 & 0.153 & 94 & 0.159                  \\
			&  (100,100,100)                                               &     95.4                        &   .109       & 94.8 & 0.110 & 94.2 & 0.101              \\

			&                                                 &                             &                        \\~\\
			(1,2,1,2,0.5;2,1,2,1,0.5;2,2,1,1,-0.5)	& (10,10,10)                                                &   96                          &     .557   & 91 & 0.549 & 89 & 0.514               \\
			&      (20,20,20)                                           &                          98   &    .362  & 99 & 0.38 & 91 & 0.391                  \\
			&  (30,30,30)                                               &                          92   &     .268     & 92 & 0.306 & 91 & 0.321              \\
			&     (50,50,50)                                                &    98                     &     .193       & 99 & 0.235 &93 & 0.242            \\
			&      (70,70,70)                                           &                          97   &     .198    & 94 & 0.189 & 92.5 & 0.191               \\
			&  (100,100,100)                                               &      96                       &    .183      & 94 & 0.185 & 94 & 0.187              \\~\\
			(5,1,1,5,0.2;5,1,2,2,0.5;1,5,5,1,0.9)	& (10,10,10)                                                &   97                          &    .460  & 90 & 0.465 & 88 & 0.45                  \\
			&      (20,20,20)                                           &                          93   &       .345             & 92 & 0.325 & 91 & 0.34    \\
			&  (30,30,30)                                               &                           94  &     .275     & 92 & 0.295 & 92 & 0.29              \\
			&     (50,50,50)                                                &                        96 &    .174     & 97 & 0.17 & 98 & 0.183               \\
			&      (70,70,70)                                           &     96                        &    .165    & 93.5 & 0.166 & 94 & 0.169                \\
			&  (100,100,100)                                               &   95.7                          &   .151       & 94 & 0.150 & 94 & 0.153              \\~\\
			(0.5,1,0.5,1,-0.5;1,1,0.5,0.5,-0.5;1,1,5,1,-0.2)	& (10,10,10)                                                &    94                          &   .359         & 92 & 0.356 & 91 & 0.341            \\
			&      (20,20,20)                                           &   98                          &   .212     & 93 & 0.201 & 92 & 0.204                 \\
			&  (30,30,30)                                               &                      97       &     .169   & 93 & 0.157 &     92 & 0.17             \\
			&     (50,50,50)                                                &   96                      &   .127          & 93 & 0.13 & 92.5 & 0.129           \\
			&      (70,70,70)                                           &    95.8                         &   .098        & 93.5 & 0.095 & 93 & 0.097             \\
			&  (100,100,100)                                               &      95.5                       &     .076       & 94 & 0.075 & 93.5 & 0.076            \\~\\
			(10,1,0.5,10,-0.9;1,1,15,0.2,-0.1;0.2,1,5,0.1,0.9)	& (10,10,10)                                                &                        92     &       .530      & 90 & 0.521 & 89 & 0.532           \\
			&      (20,20,20)                                           &                           98  &    .317       & 91 & 0.29 & 89 & 0.299             \\
			&  (30,30,30)                                               &                     97        &      .262      & 91 & 0.267 & 91 & 0.276            \\
			&     (50,50,50)                                                &                     96    &               .196   & 92 & 0.196 & 93 & 0.189      \\
			&      (70,70,70)                                           &                       96      &                .163    & 98 & 0.165 & 99 & 0.17    \\
			&  (100,100,100)                                               &                       95.8      &                   .123  & 94 & 0.12 & 93.5 & 0.125   \\
		\end{tabular}
	\end{adjustbox}
	\label{table2}
\end{table}
\FloatBarrier


\section{\textbf{Data analysis}}
In this section, we apply different methods discussed above to the Alzheimer’s disease dataset taken from the Alzheimer’s Disease Neuroimaging Initiative (ADNI\footnote{ADNI, \url{https://adni.loni.usc.edu}} (2024)). In Alzheimer’s disease research, biomarkers like amyloid-beta (A$\beta$) and phosphorylated tau (p-tau) are critical for understanding disease progression. The focus is on understanding how these biomarkers behave differently in patients with Alzheimer’s disease (AD), cognitively normal (CN) individuals, and those with late mild cognitive impairment (LMCI). This analysis aims to compare the differences in volumes under surfaces that represent probabilities involving the amyloid-beta (A$\beta$) and phosphorylated tau (p-tau) biomarkers across various cognitive states in Alzheimer’s disease.
Let random variables $X_1$, $Y_1$, and $Z_1$ represent A$\beta$ levels in AD, CN, and LMCI, respectively, and random variables $X_2$, $Y_2$, and $Z_2$ represent p-tau levels in AD, CN, and LMCI, respectively. Here, our parameter of interest is
$$\theta_V=P(X_1 < Y_1 < Z_1)-P(X_2 < Y_2 < Z_2),$$
where $P(X_1 < Y_1 < Z_1)$ is the probability that an individual with AD has lower A$\beta$ levels compared to a CN individual, and the CN individual has lower A$\beta$ levels compared to an individual with LMCI. If this probability is high, it indicates that the A$\beta$ biomarker effectively distinguishes between the three cognitive states. The probability $P(X_2 < Y_2 < Z_2)$ can be interpreted similarly.\\

This comparison tells us which biomarker, A$\beta$ or p-tau, is more effective in distinguishing among the three cognitive conditions. A positive estimate of $\theta_V$ suggests that A$\beta$ levels provide a clearer characterization of the progression from normal cognition to late mild impairment to Alzheimer's disease, whereas a negative value of $\theta_V$ indicates greater discriminatory power for p-tau levels.
\\

In total, 883 results were included in this analysis, with sample sizes for the three individual classes being 222, 122, and 539, respectively. We used the JEL, Normal approximation, and Kernel-based methods, as described in Section \ref{CI for VUS}, to construct confidence intervals for the difference between two VUSs. The JEL method took 3.4 hours, the Normal approximation method took 4.7 hours, and the Kernel-based method (with 1000 bootstraps) took 52 hours to obtain results for this real data analysis using R software on the previously specified laptop. \\

In Table \ref{real data table}, we presented the confidence intervals calculated using the three methods discussed above. Since the lower bounds of confidence intervals are positive, the A$\beta$ biomarker is more effective in highlighting the changes or differences among the three cognitive conditions. From the simulation studies, we found that the JEL method performs better, and therefore we recommend JEL-based confidence intervals for the difference of two VUSs.
\\

Note that, for the application, we consider a recent  dataset where the biomarkers and disease stages are entirely distinct from those used in the real-data analysis in \cite{an2018jackknife}.

\begin{table}[h]
	\caption{Confidence Intervals for $\theta$ at different confidence levels using different methods}
	\label{real data table}
	\begin{tabular}{l l l l }
		\hline
		Method              & JEL             & Normal Approx   & Kernel     \\ \hline \hline
		Confidence Interval 90\% & (0.3124,0.3753) & (0.3161,0.3723) & (0.3083,0.3743)\\
		
		Confidence Interval 95\% & (0.3062,0.3813) & (0.3107,0.3776) & (0.3006,0.3803) \\
		Confidence Interval 99\% & (0.3007,0.3866) & (0.3002,0.3882) & (0.2909,0.3904)\\ \hline
	\end{tabular}
\end{table}


\section{\textbf{Concluding remarks}}
In this paper, we studied U-statistics based on three multivariate samples and developed a jackknife empirical likelihood (JEL) methodology for such statistics. This would extend the  applicability of JEL methods to three sample 
multivariate problems.\\

We constructed confidence intervals for differences in two dependent  VUSs using three nonparametric procedures. We conducted an  extensive simulation study under the Marshall-Olkin bivariate exponential and Bivariate Pareto distributions. Our results showed the superior performance of the JEL method in terms of coverage probability and computational complexity when compared to the normal approximation using jackknife pseudo-values based variance estimator and kernel-based methods. \\

Additionally, we conducted a real data analysis using a recent Alzheimer's disease dataset to illustrate the practical benefits of the JEL method in real-world scenarios. This study highlights the superior performance, versatility, lower computational cost and practical benefits of the JEL method over its competitors recommending its use in statistical practices that require reliable nonparametric inference for complex data structures.\\


Beyond diagnostic testing, the proposed framework is also relevant in other scientific domains. For example, an ecologist could be interested in comparison of multiple species in three habitats: protected forest, logged forest and agricultural land. He would be interested in testing the hypothesis that the distribution of multiple species in these three habitats are identical. \cite{rizzo2016energy} have discussed  energy based statistics for this testing problem. Many of them can be expressed as U-statistics. Asymptotic distributions of the test statistics (either U-statistics or JEL ratio) will follow from the results in Section \ref{Theory}. 
\\

We are working on 
$q$-dimensional $k$-sample U-statistics and the asymptotic distribution of the associated JEL ratio. One possible application of the theory developed would be construction of  confidence intervals for hypervolume under the ROC manifold  which is a generalization of ROC curves 
to  multi-category classification.  The results will be presented elsewhere.\\


\allowdisplaybreaks

\bibliographystyle{apalike}
\bibliography{references}


\section*{\textbf{Appendix A}: \textbf{For obtaining} $EV_l$}\label{ApA}

\noindent We simplify the expression for $V_l's$ and then find $EV_l$. \\

\noindent For simplicity, denote
\begin{small}
	$$\bar{h}_{i_1,\dots,i_{m_1},j_1,\dots,j_{m_2},k_1,\dots,k_{m_3}}\hskip-0.1in=h(\bold{X}_{i_{1}}, \ldots, \bold{X}_{i_{m_{1}}}, \bold{Y}_{j_{1}-n_1}, \ldots, \bold{Y}_{j_{m_{2}}-n_1}, \bold{Z}_{k_{1}-n_1-n_2} \ldots, \bold{Z}_{k_{m_{3}}-n_1-n_2}).$$
\end{small}
For $l=1,2,\dots,n$, we have
\begin{align} \label{V_l expression}
	V_l=& n U_{n}-(n-1) U_{n-1}^{(-l)} \nonumber\\
	=& n U_{n}-\frac{(n-1)}{{n-1 \choose m}} \mathop{\sum}_{\substack{1 \leq l_{1}<\ldots<l_{m} \le n \nonumber\\ l_i\ne l,\; i=1,2,\dots,m}} \tilde{h}\left(\bold{W}_{l_{1}}, \bold{W}_{l_2}, \ldots, \bold{W}_{l_{m}}\right) \nonumber\\
	= & n U_{n}
	-\frac{(n-1)}{{n-1 \choose m}} \mathop{\sum\sum\sum}_{\substack{1 \leq i_{1}<\ldots<i_{m_{1}} \\ <  j_{1}<\ldots<j_{m_{2}}  \\ < k_{1}<\ldots<k_{m_{3}} \leq n-1 \\ i_i\le n_1< j_j \le n_1+n_2< k_k, \\	\forall \; i=1,\dots,m_1,\; j=1,\dots,m_2,\; k=1,\dots,m_3\\ i_i=j_i= k_i \ne l,\;i=1,2,\dots,m }} \frac{{n\choose m}}{{n_1\choose m_1}{n_2\choose m_2}{n_3\choose m_3}}  \bar{h}_{i_1,\dots,i_{m_1},j_1,\dots,j_{m_2},k_1,\dots,k_{m_3}} \nonumber\\
	= & n U_{n}-\frac{n(n-1)}{n-m} \frac{1}{{n_1\choose m_1}{n_2\choose m_2}{n_3\choose m_3}} \Bigg[ \mathop{\sum\sum\sum}_{\substack{1 \leq i_{1}<\ldots<i_{m_{1}} \leq n_{1}-1 \\ i_{i} \ne l,\;i=1,2,\dots,m_1 \\ n_1+1 \leq j_{1}<\ldots<j_{m_{2}} \leq n_1+ n_{2} \\ n_1+n_2+1 \leq k_{1}<\ldots<k_{m_{3}} \leq n }}  \hskip-0.2in I_{[1,n_1]}(l)\, \bar{h}_{i_1,\dots,i_{m_1},j_1,\dots,j_{m_2},k_1,\dots,k_{m_3}} \nonumber\\
	& \qquad \qquad \qquad  + \mathop{\sum\sum\sum}_{\substack{1 \leq i_{1}<\ldots<i_{m_{1}} \leq n_{1} \\  n_1+1 \leq j_{1}<\ldots<j_{m_{2}} \leq n_1+ n_{2} -1 \\ j_{j} \ne l,\;j=1,2,\dots,m_2 \\ n_1+n_2+1 \leq k_{1}<\ldots<k_{m_{3}} \leq n }}   I_{[n_1+1,n_1+n_2]}(l)\, \bar{h}_{i_1,\dots,i_{m_1},j_1,\dots,j_{m_2},k_1,\dots,k_{m_3}} \nonumber\\
	& \qquad \qquad \qquad  +\mathop{\sum\sum\sum}_{\substack{1 \leq i_{1}<\ldots<i_{m_{1}} \leq n_{1}  \\ n_1+1 \leq j_{1}<\ldots<j_{m_{2}} \leq n_1+ n_{2} \\ n_1+n_2+1 \leq k_{1}<\ldots<k_{m_{3}} \leq n-1 \\ k_{k} \ne l,\;k=1,2,\dots,m_3 }}   I_{[n_1+n_2+1,n]}(l)\, \bar{h}_{i_1,\dots,i_{m_1},j_1,\dots,j_{m_2},k_1,\dots,k_{m_3}}  \Bigg],
\end{align}
\allowdisplaybreaks
where $I_{[a,b]}(x)$ is a indicator function, defined as $I_{[a,b]}(x)=\begin{cases}
	1, &\text{if } a\leq x\leq b\\
	0, & \text{otherwise}
\end{cases}.$

\vspace{2mm}

Since $U_{n_{1}, n_{2}, n_{3}}=U_n$ and combining \eqref{U-i} and \eqref{V_i00}, \eqref{U-j} and \eqref{V_0j0}, and \eqref{U-k} and \eqref{V_00k}, we have the following expressions,
\begin{align} \label{sum-i}
	\text{(1) for $i=1,2,\dots,n_1,$} \qquad \qquad \qquad & \nonumber\\
	\mathop{\sum\sum\sum}_{\substack{1 \leq i_{1}<\ldots<i_{m_{1}} \leq n_{1}-1 \\ i_{l} \ne i,\;l=1,2,\dots,m_1 \\ 1 \leq j_{1}<\ldots<j_{m_{2}} \leq n_{2} \\ 1 \leq k_{1}<\ldots<k_{m_{3}} \leq n_{3} }} h(\bold{X}_{i_{1}}, \ldots, \bold{X}_{i_{m_{1}}},&\bold{Y}_{j_{1}}, \ldots, \bold{Y}_{j_{m_{2}}}, \bold{Z}_{k_{1}} \ldots, \bold{Z}_{k_{m_{3}}}) \nonumber\\
	&={n_1-1\choose m_1}{n_2\choose m_2}{n_3\choose m_3} \left( \frac{n_1 U_n}{n_1-1}-\frac{V_{i,0,0}}{n_1-1} \right);
\end{align}
\begin{align} \label{sum-j}
	\text{(2) for $j=1,2,\dots,n_2,$}  \qquad \qquad \qquad& \nonumber\\
	\mathop{\sum\sum\sum}_{\substack{1 \leq i_{1}<\ldots<i_{m_{1}} \leq n_{1} \\ 1 \leq j_{1}<\ldots<j_{m_{2}} \leq n_{2}-1 \\ j_{l} \ne j,\;l=1,2,\dots,m_2 \\ 1 \leq k_{1}<\ldots<k_{m_{3}} \leq n_{3} }} h(\bold{X}_{i_{1}}, \ldots, \bold{X}_{i_{m_{1}}},& \bold{Y}_{j_{1}}, \ldots, \bold{Y}_{j_{m_{2}}}, \bold{Z}_{k_{1}} \ldots, \bold{Z}_{k_{m_{3}}})\nonumber\\
	&={n_1\choose m_1}{n_2-1\choose m_2}{n_3\choose m_3} \left( \frac{n_2 U_n}{n_2-1}-\frac{V_{0,j,0}}{n_2-1} \right);
\end{align}
\begin{align} \label{sum-k}
	\text{(3) and for $k=1,2,\dots,n_3,$}  \qquad \qquad& \nonumber\\
	\mathop{\sum\sum\sum}_{\substack{1 \leq i_{1}<\ldots<i_{m_{1}} \leq n_{1} \\ 1 \leq j_{1}<\ldots<j_{m_{2}} \leq n_{2} \\ 1 \leq k_{1}<\ldots<k_{m_{3}} \leq n_{3}-1\\ k_{l} \ne k,\;l=1,2,\dots,m_3 }} h(\bold{X}_{i_{1}}, \ldots, \bold{X}_{i_{m_{1}}}, &\bold{Y}_{j_{1}}, \ldots, \bold{Y}_{j_{m_{2}}}, \bold{Z}_{k_{1}} \ldots, \bold{Z}_{k_{m_{3}}})\nonumber\\
	&= {n_1\choose m_1}{n_2\choose m_2}{n_3-1\choose m_3} \left( \frac{n_3 U_n}{n_3-1}-\frac{V_{0,0,k}}{n_3-1} \right).
\end{align}
Substituting \eqref{sum-i}, \eqref{sum-j} and \eqref{sum-k} into \eqref{V_l expression}, and we obtain
\begin{small}
	\begin{align}
		V_l= & n U_{n}-\frac{n(n-1)}{n-m} \frac{1}{{n_1\choose m_1}{n_2\choose m_2}{n_3\choose m_3}} \Bigg[	{n_1-1\choose m_1}{n_2\choose m_2}{n_3\choose m_3} \left( \frac{n_1 U_n}{n_1-1}-\frac{V_{l,0,0}}{n_1-1} \right)I_{[1,n_1]}(l)\nonumber \\
		& \qquad + {n_1\choose m_1}{n_2-1\choose m_2}{n_3\choose m_3} \left( \frac{n_2 U_n}{n_2-1}-\frac{V_{0,l-n_1,0}}{n_2-1} \right)I_{[n_1+1,n_1+n_2]}(l) \nonumber \\
		& \qquad +{n_1\choose m_1}{n_2\choose m_2}{n_3-1\choose m_3} \left( \frac{n_3 U_n}{n_3-1}-\frac{V_{0,0,l-n_1-n_2}}{n_3-1} \right)I_{[n_1+n_2+1,n]}(l)
		\Bigg]\nonumber \\
		= & n U_{n}-\frac{n(n-1)}{n-m} U_n \Bigg[\frac{n_1-m_1}{n_1(n_1-1)} I_{[1,n_1]}(l) + \frac{n_2-m_2}{n_2(n_2-1)} I_{[n_1+1,n_1+n_2]}(l) \nonumber \\ & \qquad + \frac{n_3-m_3}{n_3(n_3-1)} I_{[n_1+n_2+1,n]}(l) \Bigg]
		+ \Bigg[\frac{n_1-m_1}{n_1(n_1-1)} V_{l,0,0} I_{[1,n_1]}(l) \nonumber \\ & \qquad + \frac{n_2-m_2}{n_2(n_2-1)} V_{0,l-n_1,0} I_{[n_1+1,n_1+n_2]}(l)
		+ \frac{n_3-m_3}{n_3(n_3-1)} V_{0,0,l-n_1-n_2} I_{[n_1+n_2+1,n]}(l) \Bigg]. \label{Vi}
	\end{align}
\end{small}
\noindent Therefore the values of $E V_l$ are as follows
\begin{footnotesize}
	\begin{align*}
		EV_l= & n \theta-\frac{n(n-1)}{n-m} \theta \Bigg[\frac{n_1-m_1}{n_1(n_1-1)} I_{[1,n_1]}(l) + \frac{n_2-m_2}{n_2(n_2-1)} I_{[n_1+1,n_1+n_2]}(l) + \frac{n_3-m_3}{n_3(n_3-1)} I_{[n_1+n_2+1,n]}(l) \Bigg]\\
		& \qquad + \Bigg[\frac{n_1-m_1}{n_1(n_1-1)} \theta I_{[1,n_1]}(l) + \frac{n_2-m_2}{n_2(n_2-1)} \theta I_{[n_1+1,n_1+n_2]}(l) + \frac{n_3-m_3}{n_3(n_3-1)} \theta I_{[n_1+n_2+1,n]}(l) \Bigg].
	\end{align*}
\end{footnotesize}

\vspace*{1mm}

\section*{\textbf{Appendix B}: \textbf{Proof of Theorem } \ref{three sample}}

We first provide some lemmas  which play a vital role in the proof of 
 Theorem \ref{three sample}.

\vspace*{2mm}

The following lemma ensures both the existence and uniqueness of the solution to equation \eqref{lambda}. Its proof is analogous to the one given by \cite{jing2009jackknife} for the two-sample case, hence omitted.
\vspace*{1mm}

\noindent \begin{lemma} \label{lemma2}
	Under conditions (a) and (b) of the Theorem \ref{three sample}, we have
	
	$$
	P\left\{\min _{1 \leq i \leq n}\left(V_{i}-E V_{i}\right)<0<\max _{1 \leq i \leq n}\left(V_{i}-E V_{i}\right)\right\} \longrightarrow 1\;\;\; \text{as }\; \min\{n_1,n_2,n_3\} \rightarrow \infty.
	$$
\end{lemma}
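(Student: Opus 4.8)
\textbf{Proof proposal for Lemma \ref{lemma2}.}

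The plan is to mimic the classical argument of Owen, adapted to the jackknife pseudo-values, using the structure $V_l - EV_l$ established in Appendix A. The event in question fails exactly when all the centered pseudo-values $V_i - EV_i$ share the same sign (or some vanish so that $0$ is not strictly between the min and the max); so it suffices to show that, with probability tending to one, there is at least one strictly positive and at least one strictly negative centered pseudo-value. First I would recall from \eqref{Vi} that each $V_l$ decomposes, up to deterministic multiplicative constants depending only on $(n_i, m_i)$, into a term proportional to $U_n$ plus a term proportional to the single-sample jackknife pseudo-value $V_{l,0,0}$, $V_{0,l-n_1,0}$, or $V_{0,0,l-n_1-n_2}$ according to which block $l$ lies in. Subtracting $EV_l$ (which has the same form with $U_n$ replaced by $\theta$), the centered pseudo-value reduces to a constant times $(V_{\cdot,0,0} - U_n) + (\text{something})\cdot(U_n - \theta)$-type expressions; the key point is that within a single block, say $l = 1, \dots, n_1$, the centered quantities $V_i - EV_i$ are, up to a common positive factor and a common additive shift, just the jackknife pseudo-values $V_{i,0,0}$ recentered, and these have sample mean $\bar V_{\cdot,0,0} = U_n$.

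Next I would argue that, since $\sigma_{100}^2 > 0$ (and likewise $\sigma_{010}^2, \sigma_{001}^2 > 0$) and $U_n \xrightarrow{a.s.} \theta$ by Theorem \ref{lemma1}(a), the empirical fluctuations of the pseudo-values around $U_n$ do not degenerate: with probability tending to one the set $\{V_{i,0,0}\}_{i=1}^{n_1}$ is not contained in a half-line emanating from $U_n$. Concretely, one shows $P(\min_i V_{i,0,0} < U_n < \max_i V_{i,0,0}) \to 1$; this follows because the leading (Hájek projection) term of $V_{i,0,0} - U_n$ behaves like $m_1\big(g_1(\bold X_i) - \theta\big) + o_p(1)$ where $g_1(\bold x) = E[h \mid \bold X_1 = \bold x]$, whose variance $\sigma_{100}^2$ is strictly positive, so $g_1(\bold X_i) - \theta$ takes both signs with probability bounded away from zero, uniformly over $i$, and independence across $i$ then drives the probability of a sign violation to zero exponentially. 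Translating this back through the decomposition \eqref{Vi}, and using condition (b) to keep all the ratios $n_i/n_j$ bounded away from $0$ and $\infty$ so that the additive corrections $O(U_n - \theta) = o_p(1)$ are negligible relative to the $O(1)$ spread of the pseudo-values, one gets the min/max straddling $0$ for the fully centered $V_i - EV_i$ within at least one block, which is enough.

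Assembling these pieces: with probability tending to one there is an index in, say, the $\bold X$-block with $V_i - EV_i > 0$ and another with $V_i - EV_i < 0$, hence $\min_{1\le i \le n}(V_i - EV_i) < 0 < \max_{1\le i\le n}(V_i - EV_i)$, which is the claim. The main obstacle I anticipate is bookkeeping rather than conceptual: one must verify that the deterministic constants multiplying the pseudo-value terms in \eqref{Vi} are positive and stay bounded away from $0$ under condition (b), and that the $o_p$ remainders in the Hájek-type expansion of the single-sample pseudo-values are genuinely negligible uniformly over $i$ within a block — this uniform control over the $n_1$ (resp. $n_2$, $n_3$) summands, combined with an exponential or Chebyshev bound for the probability that $n_1$ i.i.d. centered variables with positive variance are all of one sign, is the technical heart. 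Since this parallels the two-sample computation in \cite{jing2009jackknife} almost verbatim once the block decomposition is in hand, the authors are justified in omitting the details.
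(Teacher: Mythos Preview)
Your proposal is correct and follows essentially the same route the paper takes: the paper does not give an explicit proof of this lemma but simply notes that it is ``analogous to the one given by \cite{jing2009jackknife} for the two-sample case, hence omitted.'' Your sketch---using the block decomposition \eqref{Vi}, reducing to the non-degeneracy of the single-block pseudo-values $V_{i,0,0}-U_n$ via $\sigma_{100}^2>0$, and invoking condition (b) to control the deterministic constants and the $O_p(U_n-\theta)$ shift---is exactly the three-sample analogue of Jing, Yuan and Zhou's two-sample argument, so you are aligned with the paper's intended proof.
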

\noindent \begin{lemma} \label{lemma3}
	Suppose that conditions (a) and (b) of the Theorem \ref{three sample} hold and let $S_{n}^2=\frac{1}{n} \sum_{l=1}^{n}\left(V_{l}-E V_{l}\right)^{2}$. Then,
	
	$$
	S_{n}^2=n \sigma^{2}+o_{p}(1)\; \; \; \;\text{as }\; \min\{n_1,n_2,n_3\} \rightarrow \infty.
	$$
\end{lemma}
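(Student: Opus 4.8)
The plan is to reduce the statement to a rescaled version of the second assertion of Theorem~\ref{lemma1}(b). First I would record, from the simplified expression \eqref{Vi} of Appendix~A together with \eqref{E[Vi]}, the clean two--term form
\begin{equation*}
  V_l - EV_l \;=\; \beta_j\,(U_n-\theta)\;+\;\alpha_j\,\bigl(V^{(j)}_l-\theta\bigr),\qquad l\text{ in the }j\text{-th block }(j=1,2,3),
\end{equation*}
where $V^{(1)}_l=V_{l,0,0}$, $V^{(2)}_l=V_{0,l-n_1,0}$, $V^{(3)}_l=V_{0,0,l-n_1-n_2}$ are the within--sample jackknife pseudo--values of \eqref{V_i00}--\eqref{V_00k}, $\alpha_j=\frac{n(n-1)(n_j-m_j)}{(n-m)\,n_j(n_j-1)}$ and $\beta_j=n-\alpha_j n_j$; here I use $EU_n=\theta=EV^{(j)}_l$. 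Under condition (b) one checks directly that $\alpha_j=(n/n_j)(1+o(1))$, that $n/n_j$ is bounded, and that $\beta_j$ stays bounded as $n\to\infty$ (its numerator is $O(n^2)$ while its denominator is $\asymp n\,n_j$).

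Next I would substitute this into $S_n^2=\frac1n\sum_{l=1}^n(V_l-EV_l)^2$ and discard the $\beta_j$--contribution. Since $U_n-\theta=O_p(n^{-1/2})$ by Theorem~\ref{lemma1}(b) and the $\beta_j$ are bounded, $\frac1n\sum_l\beta_j^2(U_n-\theta)^2=O(1)\cdot(U_n-\theta)^2=o_p(1)$; the mixed term is handled by Cauchy--Schwarz, being bounded by $2\bigl(\frac1n\sum_l\beta_j^2(U_n-\theta)^2\bigr)^{1/2}\bigl(\frac1n\sum_l\alpha_j^2(V^{(j)}_l-\theta)^2\bigr)^{1/2}$, whose first factor is $o_p(1)$ and whose second factor is $O_p(1)$ by what follows. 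For the $\alpha_j$--part I would use $\bar V_{\cdot,0,0}=\bar V_{0,\cdot,0}=\bar V_{0,0,\cdot}=U_n$ to write, for each block,
\begin{equation*}
  \sum_{l\in\text{block }j}\bigl(V^{(j)}_l-\theta\bigr)^2 \;=\; \sum_{l\in\text{block }j}\bigl(V^{(j)}_l-U_n\bigr)^2 \;+\; n_j\,(U_n-\theta)^2,
\end{equation*}
the cross term vanishing because the $V^{(j)}_l$ average to $U_n$. The first sum on the right equals $n_j(n_j-1)\,\hat\sigma_j^2$, where $\hat\sigma_j^2$ is the $j$-th summand of $\hat\sigma^2$ in \eqref{singma_hat}, and after multiplication by $\alpha_j^2/n$ the second sum contributes $O(1)\cdot(U_n-\theta)^2=o_p(1)$.

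Collecting the pieces gives
\begin{equation*}
  S_n^2 \;=\; \sum_{j=1}^3 \frac{\alpha_j^2\,n_j(n_j-1)}{n}\,\hat\sigma_j^2 \;+\; o_p(1) \;=\; \sum_{j=1}^3\bigl(1+o(1)\bigr)\, n\,\hat\sigma_j^2 \;+\; o_p(1),
\end{equation*}
since $\alpha_j^2 n_j(n_j-1)/n^2=\bigl[(n-1)(n_j-m_j)/((n-m)(n_j-1))\bigr]^2\,(n_j-1)/n_j\to1$. Because $n\hat\sigma^2=\sum_j n\hat\sigma_j^2$ with non-negative summands and, by Theorem~\ref{lemma1}(b), $n\hat\sigma^2=nS_{n_1,n_2,n_3}^2+o_p(1)=O_p(1)$ (here $nS_{n_1,n_2,n_3}^2$ is bounded by condition (b)), each $n\hat\sigma_j^2=O_p(1)$, so the factors $1+o(1)$ can be absorbed into the error; hence $S_n^2=n\hat\sigma^2+o_p(1)=nS_{n_1,n_2,n_3}^2+o_p(1)$, which is the claim.

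The only genuinely non-routine step is the first one: extracting the two--term form of $V_l-EV_l$ with the correct constants and, crucially, verifying that the ``cross--sample'' coefficient $\beta_j$ remains bounded --- were it to grow, the $(U_n-\theta)$ term would cease to be negligible. Everything afterward is bookkeeping with $O_p/o_p$ rates plus an appeal to the already proved Theorem~\ref{lemma1}(b); in particular no fresh martingale/CLT or Hoeffding-decomposition argument is needed here, since that work is already subsumed in the consistency statement $\hat\sigma^2-S_{n_1,n_2,n_3}^2=o_p(\min\{n_1,n_2,n_3\}^{-1})$.
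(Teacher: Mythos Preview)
Your proposal is correct and follows essentially the same route as the paper's proof: a two--term decomposition of $V_l-EV_l$ into a within--sample pseudo--value piece and an $(U_n-\theta)$ piece, showing the latter contributes $o_p(1)$, identifying the former with $n\hat\sigma^2+o_p(1)$, and then invoking Theorem~\ref{lemma1}(b). The only cosmetic difference is that the paper centers the within--block pseudo--values at $U_n$ from the outset (so the cross term in the square vanishes identically because $\sum_l(V^{(j)}_l-U_n)=0$), whereas you center at $\theta$, handle the cross term by Cauchy--Schwarz, and then re--center at $U_n$ anyway; the paper's choice is marginally cleaner but the content is the same.
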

\begin{proof}
	Using \eqref{Vi}, for $1 \leq l \leq n_{1}$, we have
	
	\begin{align*}
		V_l=& \left( n- \frac{n(n-1)(n_1-m_1)}{(n-m)(n_1-1)} \right)U_n + \frac{n(n-1)(n_1-m_1)}{n_1(n-m)(n_1-1)}  V_{l,0,0} \end{align*}
	Therefore,
	\begin{align*}	V_{l}-E V_{l}=&\frac{n(n-1)(n_1-m_1)}{n_1(n-m)(n_1-1)}\left(V_{l, 0,0}-U_{n}\right) +\frac{n}{n-m} \left(n-m -\frac{(n-1)(n_1-m_1)}{n_1} \right)\left(U_{n}-\theta\right) \end{align*}
	and
	\begin{align*}
		\frac{1}{n_{1}} \sum_{l=1}^{n_{1}}\left(V_{l, 0,0}-U_{n}\right)\left(U_{n}-\theta\right)
		=  \left(U_{n}-\theta\right)\left( \frac{1}{n_1}\sum_{l=1}^{n_1} V_{l,0,0}-U_n \right) =0.
	\end{align*}
	Hence, we have that
	
	\begin{align*} \sum_{l=1}^{n_{1}} (V_{l}-E V_{l})^2=&\left[\frac{n(n-1)(n_1-m_1)}{n_1(n-m)(n_1-1)} \right]^2 \sum_{l=1}^{n_{1}}\left(V_{l, 0,0}-U_{n}\right)^2\\
		& \qquad+\left[\frac{n}{n-m} \left(n-m -\frac{(n-1)(n_1-m_1)}{n_1} \right)\right]^2 n_1\left(U_{n}-\theta\right)^2.
	\end{align*}
	Similarly, for $n_{1}<l \leq n_{1}+n_{2}$, we have
	\begin{align*} \sum_{l=n_1+1}^{n_{1}+n_2} (V_{l}-E V_{l})^2=&\left[\frac{n(n-1)(n_2-m_2)}{n_2(n-m)(n_2-1)} \right]^2 \sum_{l=n_1+1}^{n_{1}+n_2}\left(V_{0, l,0}-U_{n}\right)^2\\
		& \quad +\left[\frac{n}{n-m} \left(n-m -\frac{(n-1)(n_2-m_2)}{n_2} \right)\right]^2 n_2\left(U_{n}-\theta\right)^2.
	\end{align*}
	And  for $n_{1}+n_{2}<l \leq n$, we have
	\begin{align*} \sum_{l=n_1+n_2+1}^{n} (V_{l}-E V_{l})^2=&\left[\frac{n(n-1)(n_3-m_3)}{n_3(n-m)(n_3-1)} \right]^2 \sum_{l=n_1+n_2+1}^{n}\left(V_{0, 0,l}-U_{n}\right)^2\\
		&\qquad+\left[\frac{n}{n-m} \left(n-m -\frac{(n-1)(n_3-m_3)}{n_3} \right)\right]^2 n_3\left(U_{n}-\theta\right)^2.
	\end{align*}
	Therefore,
	
	\begin{small}
		\begin{align*}
			S_{n}^2=&\frac{1}{n} \sum_{l=1}^{n}\left(V_{l}-E V_{l}\right)^{2} \\
			= &\frac{1}{n} \left[\frac{n(n-1)}{n-m} \right]^2\bigg[\frac{(n_1-m_1)^2}{n_1(n_1-1)}\frac{1}{n_{1}(n_1-1)} \sum_{l=1}^{n_{1}}\left(V_{l, 0,0}-\bar{V}_{\cdot, 0,0}\right)^{2}\\
			&+\frac{(n_2-m_2)^2}{n_2(n_2-1)}\frac{1}{n_{2}(n_2-1)} \sum_{l=n_{1}+1}^{n_{1}+n_{2}}\left(V_{0, l, 0}-\bar{V}_{0,\cdot, 0}\right)^{2} +\frac{(n_3-m_3)^2}{n_3(n_3-1)}\frac{1}{n_{3}(n_3-1)}\left(V_{0,0, l}-\bar{V}_{0,0, \cdot}\right)^{2}\bigg]  \\
			& +\frac{1}{n}\left[\frac{n}{n-m}\right]^{2}\bigg[n_1\left(n-m -\frac{(n-1)(n_1-m_1)}{n_1} \right)^2+n_2 \left(n-m -n_2\frac{(n-2)(n_2-m_2)}{n_2} \right)^2\\
			& \qquad \qquad \qquad \qquad +n_3 \left(n-m -\frac{(n-1)(n_3-m_3)}{n_3} \right)^2\bigg]\left(U_{n}-\theta\right)^{2} . \tag{*}
		\end{align*}
	\end{small}
	
	From part (a) of Theorem \ref{lemma1}, we have $U_{n}-\theta=O_{p}\left(\min\{n_1,n_2,n_3\}^{-1 / 2}\right)$. Therefore, the second term in Equation (*) is
	\begin{align*}
		&	\frac{1}{n}\left[\frac{n}{n-m}\right]^{2}\bigg[n_1\left(n-m -\frac{(n-1)(n_1-m_1)}{n_1} \right)^2+n_2 \left(n-m -\frac{(n-1)(n_2-m_2)}{n_2} \right)^2\\
		& \qquad \qquad \qquad  +n_3 \left(n-m -\frac{(n-1)(n_3-m_3)}{n_3} \right)^2\bigg]\left(U_{n}-\theta\right)^{2} = O_p(\min\{n_1,n_2,n_3\}^{-1}).
	\end{align*}
	Moreover, the first term of equation (*) is (see Chapter 2 of \cite{wang2010empirical})
	\begin{small}
		\begin{align*}
			&\frac{1}{n} \left[\frac{n(n-1)}{n-m} \right]^2\bigg[\frac{(n_1-m_1)^2}{n_1(n_1-1)}\frac{1}{n_{1}(n_1-1)} \sum_{l=1}^{n_{1}}\left(V_{l, 0,0}-\bar{V}_{\cdot, 0,0}\right)^{2}\\
			&\quad+\frac{(n_2-m_2)^2}{n_2(n_2-1)}\frac{1}{n_{2}(n_2-1)} \sum_{l=n_{1}+1}^{n_{1}+n_{2}}\left(V_{0, l, 0}-\bar{V}_{0,\cdot, 0}\right)^{2} +\frac{(n_3-m_3)^2}{n_3(n_3-1)}\frac{1}{n_{3}(n_3-1)}\left(V_{0,0, l}-\bar{V}_{0,0, \cdot}\right)^{2}\bigg]  \\
			& =n \hat{\sigma}^{2}+o_{p}(1).
		\end{align*}
	\end{small}
	By substituting the above into equation (*) and using Theorem \ref{lemma1}, we obtain
	$$
	S_{n}^2=n \sigma^{2}+o_{p}(1).
	$$
	This prove the assertion.
\end{proof}
The proof of the following Lemma is straight forward and on the same lines as of \cite{an2018jackknife}, and hence omitted.
\vspace*{2mm}

\noindent \begin{lemma} \label{lemma4}
	Suppose that conditions (a) and (b) of the Theorem \ref{three sample} hold. Then, $$\max_{1 \leq l \leq n}\left|V_{l}-EV_l\right|=o_{p}\left(n^{1 / 2}\right) \;\; \text{ and }\;\; \frac{1}{n} \sum_{l=1}^{n}\left|V_{l}-EV_l\right|^{3}=o_{p}\left(n^{1 / 2}\right).$$
\end{lemma}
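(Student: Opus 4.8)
The plan is to reduce the two statements to a single uniform bound on the one-sample jackknife pseudo-values $V_{l,0,0},V_{0,\cdot,0},V_{0,0,\cdot}$, and then to analyze those via a Hoeffding-type representation. The starting point is the identity \eqref{Vi}: for $1\le l\le n_1$ it reads $V_l-EV_l=a_n(V_{l,0,0}-U_n)+b_n(U_n-\theta)$ with $a_n=\frac{n(n-1)(n_1-m_1)}{n_1(n-m)(n_1-1)}$ and $b_n=\frac{n}{n-m}\bigl(n-m-\frac{(n-1)(n_1-m_1)}{n_1}\bigr)$, together with the analogous identities for the other two blocks. Condition (b) forces $n_r\asymp n$ for $r=1,2,3$ (so $\min\{n_1,n_2,n_3\}\asymp n$ and $n^{1/2}\asymp n_r^{1/2}$), and a short computation then shows that $a_n$, $b_n$ and their block counterparts are all $O(1)$. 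Since Theorem \ref{lemma1} gives $U_n-\theta=O_p(n^{-1/2})$, the $b_n(U_n-\theta)$ contributions are uniformly $O_p(n^{-1/2})=o_p(n^{1/2})$, so it suffices to prove $\max_{1\le l\le n_1}|V_{l,0,0}-\theta|=o_p(n^{1/2})$, and symmetrically for the other two blocks.

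For this I would use the classical representation $V_{l,0,0}=m_1\hat g^{(1)}(X_l)-(m_1-1)U_{n_1-1,n_2,n_3}^{(-l)}$, where $\hat g^{(1)}(X_l)$ denotes the average of the kernel over all index sets containing $l$ among the $X$-indices and over all choices of the $Y$- and $Z$-indices; conditionally on $\bold{X}_l$ this $\hat g^{(1)}(X_l)$ is a multi-sample U-statistic with mean $g^{(1)}(X_l)=E[h\mid\bold{X}_1]$. Then $V_{l,0,0}-\theta=m_1(g^{(1)}(X_l)-\theta)+m_1(\hat g^{(1)}(X_l)-g^{(1)}(X_l))-(m_1-1)(U_{n_1-1,n_2,n_3}^{(-l)}-\theta)$, and I would bound the three pieces separately. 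The first is a sum of i.i.d. variables with $E(g^{(1)}(X_1))^2\le E h^2<\infty$ (conditional Jensen), so by the standard fact that $\max_{i\le N}|Y_i|=o_p(N^{1/2})$ for i.i.d. $Y_i$ with finite second moment, $\max_l|g^{(1)}(X_l)-\theta|=o_p(n^{1/2})$. For the third piece, $U_{n_1-1,n_2,n_3}^{(-l)}-U_n=(U_n-V_{l,0,0})/(n_1-1)$ and the jackknife-variance bound $\hat\sigma^2=O_p(1/n)$ from Theorem \ref{lemma1} (combined with $\bar V_{\cdot,0,0}=U_n$) gives $\sum_l(V_{l,0,0}-U_n)^2=O_p(n)$, whence $\max_l|U_{n_1-1,n_2,n_3}^{(-l)}-\theta|=O_p(n^{-1/2})=o_p(n^{1/2})$. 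For the middle piece, the variance formula for multi-sample U-statistics yields $E[(\hat g^{(1)}(X_l)-g^{(1)}(X_l))^2]=E[\mathrm{Var}(\hat g^{(1)}(X_l)\mid\bold{X}_l)]=O(1/n)$ using $Eh^2<\infty$, so $E\sum_{l=1}^{n_1}(\hat g^{(1)}(X_l)-g^{(1)}(X_l))^2=O(1)$ and Markov gives $\max_l|\hat g^{(1)}(X_l)-g^{(1)}(X_l)|=O_p(1)=o_p(n^{1/2})$. Adding the three bounds, and repeating over the other two blocks, yields $\max_{1\le l\le n}|V_l-EV_l|=o_p(n^{1/2})$.

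The second assertion is then essentially free: $\frac1n\sum_{l=1}^n|V_l-EV_l|^3\le\bigl(\max_{1\le l\le n}|V_l-EV_l|\bigr)\cdot\frac1n\sum_{l=1}^n(V_l-EV_l)^2=\bigl(\max_l|V_l-EV_l|\bigr)\,S_n^2$, and by Lemma \ref{lemma3}, $S_n^2=nS_{n_1,n_2,n_3}^2+o_p(1)=O_p(1)$ because $nS_{n_1,n_2,n_3}^2$ is bounded under condition (b). Multiplying the $O_p(1)$ by the $o_p(n^{1/2})$ bound from the previous step finishes the proof.

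The main obstacle is the uniformity in $l$: pointwise $o_p$/$O_p$ bounds on $V_{l,0,0}-\theta$ are routine, but we need them to hold simultaneously across roughly $n$ indices without degrading the rate past $n^{1/2}$. Two devices make this go through: (i) isolating the i.i.d. leading term $m_1 g^{(1)}(X_l)$, so that the sharp ``max of i.i.d. with finite second moment is $o_p(\sqrt N)$'' estimate applies; and (ii) controlling the maxima of the non-i.i.d. remainders by sums of squares whose expectations are $O(1)$, which costs only a negligible $O_p(1)$ factor. One point to handle with care is the variance bound $\mathrm{Var}(\hat g^{(1)}(X_l)\mid\bold{X}_l)=O(1/n)$ when some $m_r=1$ (so the corresponding degree in $\hat g^{(1)}$ is $0$), but the multi-sample U-statistic variance formula still gives an $O(1/\min\{n_1,n_2,n_3\})$ bound, which is $O(1/n)$ under (b).
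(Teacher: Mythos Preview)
Your proof is correct. The paper omits the proof entirely, stating only that it ``is straight forward and on the same lines as of \cite{an2018jackknife}, and hence omitted''; the route you take --- reducing via \eqref{Vi} to the single-sample pseudo-values, decomposing $V_{l,0,0}=m_1\hat g^{(1)}(\mathbf{X}_l)-(m_1-1)U^{(-l)}_{n_1-1,n_2,n_3}$ into an i.i.d.\ leading piece plus remainders with $O_p(1)$ total sum of squares, then factoring out the max and invoking $S_n^2=O_p(1)$ from Lemma \ref{lemma3} for the cubic bound --- is precisely the standard argument used in \cite{jing2009jackknife} and \cite{an2018jackknife} for the one- and two-sample cases, so you have supplied what the paper leaves implicit.
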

\noindent \begin{lemma} \label{lemma5}
	Under the conditions (a) and (b) of the Theorem \ref{three sample}, we have $$|\lambda|=O_{p}\left(n^{-1 / 2}\right) \;\; \text{ and }\;\; \lambda=\frac{\left(U-\theta\right)}{ S_{n}^2}+o_{p}\left(n^{-1 / 2}\right).$$
\end{lemma}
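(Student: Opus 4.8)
The plan is to follow the classical Owen-type argument, adapted to the jackknife pseudo-values exactly as in \cite{jing2009jackknife}. Throughout, write $Y_i=V_i-EV_i$, so that $\bar Y:=\frac1n\sum_{i=1}^n Y_i=U_n-\theta$ (recall $\frac1n\sum V_i=U_n$ and $\frac1n\sum EV_i=\theta$), $S_n^2=\frac1n\sum_{i=1}^n Y_i^2$, and set $Y^\ast=\max_{1\le i\le n}|Y_i|$. Here $U$ stands for $U_{n_1,n_2,n_3}=U_n$. The ingredients I would invoke are: Lemma \ref{lemma2}, which guarantees that with probability tending to one the solution $\lambda$ of \eqref{lambda} exists with $1+\lambda Y_i>0$ for all $i$; Lemma \ref{lemma3}, which gives $S_n^2=nS_{n_1,n_2,n_3}^2+o_p(1)$, hence (using condition (b), under which $nS_{n_1,n_2,n_3}^2$ stays between two positive constants) that $S_n^2$ is bounded and bounded away from $0$ with probability tending to one; Lemma \ref{lemma4}, which gives $Y^\ast=o_p(n^{1/2})$ and $\frac1n\sum_i|Y_i|^3=o_p(n^{1/2})$; and Theorem \ref{lemma1}, which gives $\bar Y=U_n-\theta=O_p(n^{-1/2})$ under condition (b).

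First I would establish the order bound. Starting from \eqref{lambda}, using $\frac{Y_i}{1+\lambda Y_i}=Y_i-\frac{\lambda Y_i^2}{1+\lambda Y_i}$ and summing gives
\begin{equation*}
	\bar Y=\lambda\cdot\frac1n\sum_{i=1}^n\frac{Y_i^2}{1+\lambda Y_i}.
\end{equation*}
Since the summands are nonnegative and $1+\lambda Y_i\le 1+|\lambda|Y^\ast$, the sum is at least $S_n^2/(1+|\lambda|Y^\ast)$, so $|\bar Y|(1+|\lambda|Y^\ast)\ge|\lambda|S_n^2$, i.e. $|\lambda|\le|\bar Y|/(S_n^2-|\bar Y|Y^\ast)$. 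Because $|\bar Y|Y^\ast=O_p(n^{-1/2})\,o_p(n^{1/2})=o_p(1)$ while $S_n^2$ is bounded away from $0$, the denominator is bounded away from $0$ with probability tending to one, yielding $|\lambda|=O_p(n^{-1/2})$.

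Then I would refine this to the stated expansion. From $|\lambda|=O_p(n^{-1/2})$ and $Y^\ast=o_p(n^{1/2})$ we get $|\lambda|Y^\ast=o_p(1)$, so $\min_i(1+\lambda Y_i)\ge 1-|\lambda|Y^\ast\to 1$ in probability. Writing $\frac{Y_i^2}{1+\lambda Y_i}=Y_i^2-\frac{\lambda Y_i^3}{1+\lambda Y_i}$ in the displayed identity gives
\begin{equation*}
	\bar Y=\lambda S_n^2-\lambda^2\cdot\frac1n\sum_{i=1}^n\frac{Y_i^3}{1+\lambda Y_i},
\end{equation*}
and the remainder is bounded in absolute value by $|\lambda|^2(1-|\lambda|Y^\ast)^{-1}\,\frac1n\sum_i|Y_i|^3=O_p(n^{-1})\cdot O_p(1)\cdot o_p(n^{1/2})=o_p(n^{-1/2})$. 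Hence $\lambda S_n^2=\bar Y+o_p(n^{-1/2})$, and dividing by $S_n^2$ (again bounded away from $0$) gives $\lambda=\bar Y/S_n^2+o_p(n^{-1/2})=(U-\theta)/S_n^2+o_p(n^{-1/2})$.

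The two algebraic identities for $\frac{Y_i}{1+\lambda Y_i}$ are routine; the only real subtlety is the uniform control of the denominators $1+\lambda Y_i$, which is precisely why the crude bound $|\lambda|=O_p(n^{-1/2})$ must be proved before the expansion can be justified. The quantitative input that makes both steps go through is that $S_n^2$ stays bounded away from $0$ in probability, which rests on Lemma \ref{lemma3} together with condition (b); and the existence of $\lambda$ itself (so that the estimating equation and the bounds are meaningful) is supplied by Lemma \ref{lemma2}.
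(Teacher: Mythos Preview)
Your proposal is correct and follows essentially the same approach as the paper's own proof: both start from the estimating equation \eqref{lambda}, use the decomposition $\frac{Y_i}{1+\lambda Y_i}=Y_i-\frac{\lambda Y_i^2}{1+\lambda Y_i}$ together with Lemmas \ref{lemma3}--\ref{lemma4} and Theorem \ref{lemma1} to obtain the crude bound $|\lambda|=O_p(n^{-1/2})$, and then iterate once more via $\frac{Y_i^2}{1+\lambda Y_i}=Y_i^2-\frac{\lambda Y_i^3}{1+\lambda Y_i}$ to extract the expansion. Your rearrangement $|\lambda|\le |\bar Y|/(S_n^2-|\bar Y|Y^\ast)$ is a slightly cleaner variant of the paper's inequality $|\lambda|S_n^2/(1+|\lambda|Y^\ast)\le |U_n-\theta|$, and you are more explicit than the paper about why $S_n^2$ is bounded away from zero, but the substance is identical.
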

\noindent \begin{proof}
	Noting that, \eqref{lambda} together with the fundamental inequality $\mid x \pm$ $y|\geq| x|-| y \mid$ leads to

	\begin{align*}
		0 &=\left| 	\frac{1}{n} \sum_{i=1}^{n} \frac{V_{i}-EV_{i}}{1+\lambda\left(V_{i}-EV_{i}\right)}\right|  =\frac{1}{n}\left|\sum_{i=1}^{n}\left(V_{i}-E V_{i}\right)-\lambda \sum_{i=1}^{n} \frac{\left(V_{i}-E V_{i}\right)^{2}}{1+\lambda\left(V_{i}-E V_{i}\right)}\right| \\
		& \geq \frac{1}{n}\left|\lambda \sum_{i=1}^{n} \frac{\left(V_{i}-E V_{i}\right)^{2}}{1+\lambda\left(V_{i}-E V_{i}\right)}\right| -\frac{1}{n}\left|\sum_{i=1}^{n}\left(V_{i}-E V_{i}\right)\right| \\
		&\geq  \frac{|\lambda| S_{n}^2}{1+|\lambda| \left(\max_{1 \leq l \leq n}\left|V_{l}-EV_l\right|\right)}-\left|U_n-\theta\right|
	\end{align*}
	Using the Theorem \ref{lemma1}, the second term is  $O_{p}\left(n_{1}^{-1 / 2}\right)$ and, from Lemma \ref{lemma3}, we obtain $$|\lambda|\left(1+|\lambda| \left(\max_{1 \leq l \leq n}\left|V_{l}-EV_l\right|\right)\right)^{-1}=O_{p}\left(n^{-1 / 2}\right).$$
	Therefore, using $\max_{1 \leq l \leq n}\left|V_{l}-EV_l\right|=o_{p}\left(n^{1 / 2}\right)$ from Lemma \ref{lemma4}, we have
	\begin{equation} \label{modelambda}
		|\lambda|=O_{p}\left(n^{-1 / 2}\right).
	\end{equation}
	To prove that second part of lemma, let $\gamma_{l}=\lambda\left(V_{l}-E V_{l}\right),\;l=1,2,\dots,n$. Then, using Lemma \ref{lemma4}, we get

	\begin{align} \label{modegamma}
		\max _{1 \leq l \leq n}\left|\gamma_{l}\right| & =|\lambda| \max _{1 \leq l \leq n}\left|V_{l}-E V_{l}\right| =O_{p}\left(n^{-1 / 2}\right) o\left(n^{1 / 2}\right)=o_{p}(1).
	\end{align}
	From \eqref{lambda}, we have
	\begin{align*}
		0 &=	\frac{1}{n} \sum_{i=1}^{n} \frac{V_{i}-EV_{i}}{1+\lambda\left(V_{i}-EV_{i}\right)} \\ &=\frac{1}{n}\sum_{i=1}^{n}\left(V_{i}-E V_{i}\right)-\lambda \frac{1}{n}\sum_{i=1}^{n} \frac{\left(V_{i}-E V_{i}\right)^{2}}{1+\lambda\left(V_{i}-E V_{i}\right)} \\
		&=\frac{1}{n}\sum_{i=1}^{n}\left(V_{i}-E V_{i}\right)-\lambda \frac{1}{n}\sum_{i=1}^{n} \frac{\left(\left(V_{i}-E V_{i}\right)^{2}(1+\gamma_i)-\lambda\left(V_{i}-E V_{i}\right)^{3}\right)}{1+\gamma_i} \\
		& =U_n-\theta-\lambda S_{n}^2+\lambda^2\frac{1}{n} \sum_{i=1}^{n} \frac{\left(V_{i}-E V_{i}\right)^3 }{1+\gamma_{i}}.
	\end{align*}
	Also, using Lemma \ref{lemma4} and equations \eqref{modelambda}-\eqref{modegamma}, last term in the above expression is bounded by
	$$
	\frac{\frac{1}{n}\sum_{i=1}^{n}\left|V_{i}-E V_{i}\right|^{3}}{\left|1+\max _{1 \leq l \leq n}\left|\gamma_{l}\right|\right|} \lambda^{2}=o_p\left(n^{1 / 2}\right)o_{p}(1) O_{p}\left(n^{-1}\right) =o_{p}\left(n^{-1 / 2}\right)
	$$
	Therefore, we get
	$$
	\lambda=\frac{\left(U-\theta\right)}{ S_{n}^2}+o_{p}\left(n^{-1 / 2}\right).
	$$
	Hence the lemma is proved.
\end{proof}


\begin{proof} \textbf{of Theorem \ref{three sample}: }
	Now, we are ready to prove the theorem. Take $\gamma_{l}=\lambda\left(V_{l}-EV_l\right),\; l=1,2,\dots,n,$ and consider the $\log$ likelihood ratio given by

	\begin{align*}
		-2 \log R(\theta) & =2 \sum_{1=1}^{n} \log \left(1+\lambda\left(V_{i}-EV_i\right)\right) \\
		& =2 \sum_{1=1}^{n} \log \left(1+\gamma_{i}\right) .
	\end{align*}

	From the Taylor series expansion for $\log \left(1+\gamma_{i}\right)$, we obtain
	\begin{equation}\label{taylor}
		-2 \log R(\theta)=2 \sum_{1=1}^{n} \gamma_{i}-\sum_{1=1}^{n}{\gamma_{i}}^{2}+\frac{2}{3} \sum_{1=1}^{n} \gamma_{i}{ }^{3}-\frac{1}{2} \sum_{1=1}^{n}{\gamma_{i}}^{4}+\ldots
	\end{equation}
	
	Let us first consider the third term of \eqref{taylor}
	\begin{align*}
		\frac{2}{3} \sum_{1=1}^{n} \gamma_{i}{ }^{3} =\frac{2}{3} \sum_{1=1}^{n} \lambda^{3}\left(V_{i}-EV_i\right)^{3}
		& \leq \frac{2}{3}\, n^{\frac{3}{2}}|\lambda|^3\frac{1}{n} \sum_{1=1}^{n}\left|V_{i}-EV_i\right|^{3} \frac{1}{\sqrt{n}} \\
		& =n^{\frac{3}{2}} O_{p}(n^{-3/2}) o_{p}(n^{\frac{1}{2}}) o_{p}(n^{-\frac{1}{2}}) \tag{By using Lemmas \ref{lemma4} and \ref{lemma5}}\\
		& = o_{p}(1).
	\end{align*}
	Hence, it is straightforward to verify that all terms with powers of $\gamma_i$ equal to or greater than 3 will be $o_{p}(1)$.\\
	
	\noindent Next, we consider the second term of \eqref{taylor} as
	\begin{align*}
		\sum_{1=1}^{n} \gamma_{i}^{2}  =\sum_{1=1}^{n} \lambda^{2}\left(V_{i}-\theta\right)^{2}
		= n\lambda^2 S_n^2
		& \leq n|\lambda|^2 (n S_{n_1,n_2,n_3}^2+o_p(1)) \tag{By using Lemma \ref{lemma3}} \\
		& = n\; o_p(n^{-1}) (n S_{n_1,n_2,n_3}^2+o_p(1)) \tag{By using Lemma \ref{lemma5}}\\
		& = o_{p}(1).
	\end{align*}
	Now, consider the first term of \eqref{taylor}
	\begin{align*}
		2 \sum_{1=1}^{n} \gamma_{i}  =2 \sum_{1=1}^{n} \lambda\left(V_{i}-EV_i\right)
		& = 2\lambda n \frac{1}{n} \sum_{1=1}^{n}\left(V_{i}-EV_i\right) \\
		& = 2\lambda n(U_n-\theta) \\
		& =\frac{n(U_n-\theta)^{2}}{S_n^2} + n o_p(n^{-1/2})(U_n-\theta)\tag{By using Lemma \ref{lemma5}}\\
		& =\frac{n(U_n-\theta)^{2}}{S_n^2} + n\, o_p(n^{-1/2})O_p(n^{-1/2})\tag{By using Theorem \ref{lemma1}}\\
		& =\frac{n(U_n-\theta)^{2}}{S_n^2} + o_p(1).
	\end{align*}\\
	Therefore,
	\begin{align*}
		-2 \log R(\theta)  =\frac{n(U_n-\theta)^{2}}{S_n^2} +o_p(1)
		=\frac{(U_n-\theta)^{2}}{\sigma^{2}} +o_p(1)	\; \; \; \;\text{as }\; n_{1} \rightarrow \infty. \tag{By using Lemma \ref{lemma3}}
	\end{align*}
	
	By using part (b) of Theorem \ref{lemma1} and the Slutsky's theorem, $-2 \log R(\theta)$ converges in distribution to a $\chi^{2}$ random variable with one degree of freedom. Hence the theorem is proved.\\
\end{proof}

\end{document}